\author{Vadim Lozin}
\date{}
\title{Graph parameters, Ramsey theory \\ and the speed of hereditary properties\thanks{Some results presented in this paper appeared in the extended abstract 
\cite{IWOCA2017} published in the proceedings of the 28th International Workshop on Combinatorial Algorithms, IWOCA 2017.}} 
\tikzstyle{vertex}=[circle,fill=black!100,text=white,inner sep=0.8mm]
\tikzstyle{point}=[circle,fill=black,inner sep=0.1mm]
\theoremstyle{plain}
\newtheorem{theorem}{Theorem}
\newtheorem{lemma}{Lemma}
\newtheorem{corollary}{Corollary}
\theoremstyle{definition}
\newtheorem{definition}{Definition}
\theoremstyle{remark}
\begin{document}
\maketitle

\begin{abstract}
The speed of a hereditary property $P$ is the number $P_n$ of $n$-vertex labelled graphs in $P$. 
It is known that the rates of growth of $P_n$ constitute discrete layers and the speed jumps, 
in particular, from constant to polynomial, from polynomial to exponential and from exponential to 
factorial. One more jump occurs when the entropy $\lim_{n\to\infty}\frac{\log_2 P_n}{\binom{n}{2}}$ changes from 0 to a nonzero value. 
In the present paper, for each of these jumps we identify a graph parameter responsible for it,
i.e. we show that a jump of the speed coincides with a jump of the respective parameter from finitude to infinity. 
In particular, we show that 
\begin{itemize}
\item[--] the speed of a hereditary property $P$ is sub-factorial if and only if the neighbourhood diversity of graphs in $P$ is bounded by a constant,  
\item[--] the entropy of a hereditary property $P$ is 0 if and only if the VC-dimension of graphs in $P$ is bounded by a constant.  
\end{itemize}
All the result are obtained by Ramsey-type arguments.  
\end{abstract}


\section{Introduction}


A {\it graph property} is an infinite class of graphs closed under isomorphism. 
A property is {\it hereditary} if it is closed under taking induced subgraphs.
The number of $n$-vertex labelled graphs in a property $P$ is known as 
the {\it speed} of $P$ and is denoted by $P_n$.

According to Ramsey's Theorem, there exist precisely two minimal hereditary properties: the complete graphs and the edgeless graphs. 
In both cases, the speed is obviously $P_n=1$. On the other extreme, lies the set of all simple graphs, in which case the speed is $P_n=2^{\binom{n}{2}}$.
Between these two extremes, there are uncountably many other hereditary properties and their speeds have been extensively studied, 
originally in the special case of a single forbidden subgraph, and more recently in general. For example,
Erd\H os et al. \cite{EKR76} and Kolaitis et al. \cite{KPR87} studied $K_r$-free graphs, Erd{\H{o}}s et al. \cite{EFR86} studied properties where
a single graph is forbidden as a subgraph (not necessarily induced), and Pr\"omel and Steger obtained a
number of results \cite{PS1,PS2,PS3} for properties defined by a single forbidden induced subgraph. This line
of research culminated in a breakthrough result stating that for every hereditary property $P$ different from
the set of all finite graphs, the entropy $\lim_{n\to\infty}\frac{\log_2 P_n}{\binom{n}{2}}$ satisfies
\begin{equation}
\lim_{n\to\infty}\frac{\log_2 P_n}{\binom{n}{2}}=1-\frac{1}{k(P)},
\end{equation}
where $k(P)$ is a natural number called the {\it index} of $P$. To define this notion, let us denote by ${\cal E}_{i,j}$ the class
of graphs whose vertices can be partitioned into at most $i$ independent sets and $j$ cliques. In particular,
${\cal E}_{2,0}$ is the class of bipartite graphs, ${\cal E}_{0,2}$ is the class of co-bipartite (i.e. complements of bipartite) graphs and ${\cal E}_{1,1}$ is the class of split graphs. 
Then $k(P)$ is the largest $k$ such that $P$ contains ${\cal E}_{i,j}$ with $i+j = k$. This result was obtained independently by Alekseev \cite{Ale92} and Bollob\' as
and Thomason \cite{BT95,BT97} and is known nowadays as the Alekseev-Bollob\'as-Thomason Theorem (see e.g. \cite{almost}). 
%
This theorem shows that the set of possible values for the entropy is not continuous, but in fact undergoes a series of discrete `jumps'.
In particular, the entropy jumps from 0  to 1/2.

A systematic study of hereditary properties of low speed (0 entropy) was initiated by Scheinerman and Zito in \cite{low}. 
In particular, they revealed that the speed jumps 
\begin{itemize}
\item from constant (classes $P$ with $P_n = \Theta(1)$) to polynomial ($P_n = n^{\Theta(1)})$, 
\item from polynomial to exponential ($P_n = 2^{\Theta(n)})$ and 
\item from exponential to factorial ($P_n = n^{\Theta(n)})$.
\end{itemize} 
Independently, similar results have been obtained by Alekseev in \cite{Ale97}.
Moreover, Alekseev described the set of minimal classes in all the four lower layers and the asymptotic structure of properties in the first three of them. 

In the present paper we complement this line of research by identifying graph parameters responsible for the above mentioned jumps.
In particular, we show that 
\begin{itemize}
\item the speed of a hereditary property $P$ is sub-factorial if and only if the neighbourhood diversity of graphs in $P$ is bounded by a constant,  
\item the entropy of a hereditary property $P$ is 0 if and only if the VC-dimension of graphs in $P$ is bounded by a constant.  
\end{itemize}

The organization of the paper is as follows. We start by mentioning Ramsey's Theorem and related results pertinent to the topic of the paper in Section~\ref{sec:Ramsey}.
Then  we introduce a number of graph parameters and prove some results about them in Section~\ref{sec:parameters}.
Finally, we apply the results of Section~\ref{sec:parameters} to the speed of hereditary properties in Section~\ref{sec:speed}. 
Section~\ref{sec:con} concludes the paper with a number of open problems.  
In the rest of the present section, we introduce basic definitions and notations used in the paper.

\medskip
We consider only simple undirected graphs without loops and multiple edges and denote the vertex set and the edge set of a graph $G$ by $V(G)$ and $E(G)$, respectively.
If $v$ is a vertex of $G$, then $N(v)$ is its {\it neighbourhood}, i.e. the set of vertices of $G$ adjacent to $v$. The {\it closed neighbourhood} of $v$ is  defined and is denoted as $N[v]=N(v)\cup \{v\}$.
The {\it degree} of $v$ is $|N(v)|$.
In a graph, 
\begin{itemize}
\item an {\it independent set} is a subset of vertices no two of which are adjacent, 
\item a {\it clique} is a subset of vertices every two of which are adjacent, and 
\item a {\it matching} is a subset of edges no two of which share a vertex. 
\end{itemize}
For a graph $G$, we denote by $\overline{G}$ the complement of $G$. Similarly, for a class $X$ of graphs, we denote by $\overline{X}$ the class of complements of graphs in $X$.

Given a graph $G$ and a subset $U\subseteq V(G)$, we denote by $G[U]$ the subgraph of $G$ induced by $U$, i.e. the subgraph obtained from $G$ by
deleting all the vertices not in $U$. We say that a graph $G$ contains a graph $H$ as an induced subgraph if $H$ is isomorphic to an induced subgraph of $G$.
A graph $G$ is said to be $n$-universal for a class of graphs $X$ if $G$ contains all $n$-vertex graphs from $X$ as induced subgraphs. 

A class $X$ of graphs is {\it hereditary} if it is closed under taking induced subgraphs, 
i.e. if $G\in X$ implies $H\in X$ for every graph $H$ contained in $G$ as an induced subgraph. 
Two hereditary classes of particular interest in this paper are split graphs and bipartite graphs.

A graph $G$ is a {\it split} graph if $V(G)$ can be partitioned into an independent set and a clique, and $G$ is {\it bipartite} if $V(G)$ 
can be partitioned into at most two independent sets. A bipartite graph $G$ given together with 
a bipartition of its vertices into independent sets $A$ and $B$ will be denoted $G=(A,B,E)$, in which case we will say that $A$ and $B$ are 
the color classes or simply parts of $G$. If every vertex of $A$ is adjacent to every vertex of $B$, then $G=(A,B,E)$ is {\it complete bipartite}, also known as a {\it biclique}. 
A {\it star} is a complete bipartite graph with one of the color classes being of size 1. The star with the second color class being of size $n$ is denoted $K_{1,n}$.
The {\it bipartite complement} of a bipartite graph  $G=(A,B,E)$ is the bipartite graph $G'=(A,B,E')$, where $ab\in E'$ if and only if $ab\not\in E$. 
Clearly, by creating a clique in one of the color classes of a bipartite graph, we transform it into a split graph, and vice versa.  

Given a graph $G$ and two disjoint subset $A,B$ of its vertices, we denote by $G[A,B]$ the bipartite (not necessarily induced) subgraph with color classes $A$ and $B$ formed by the edges of $G$
between $A$ and $B$. 

\section{Ramsey theory}
\label{sec:Ramsey}

In 1930, a 26 years old British mathematician Frank Ramsey proved the following theorem, known nowadays as Ramsey's Theorem. 

\begin{theorem}{\rm \cite{Ramsey}}\label{thm:Ramsey}
For any positive integers $k$, $r$, $p$, there exists a minimum positive integer $F=F(k,r,p)$ such that 
if the $k$-subsets of an $F$-set are colored with $r$ colors, then there is a monochromatic $p$-set, i.e.
a $p$-set all of whose $k$-subsets have the same color.  
\end{theorem}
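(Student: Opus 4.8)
The plan is to prove Ramsey's Theorem by induction on the uniformity $k$, the size of the subsets being colored, with the pigeonhole principle serving as the base case and the Erd\H os--Rado refinement argument driving the inductive step. For $k=1$ the statement is exactly the pigeonhole principle: if the elements of an $F$-set are colored with $r$ colors and $F \geq r(p-1)+1$, then some color class contains at least $p$ elements, giving $F(1,r,p) = r(p-1)+1$.

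For the inductive step, I would assume $F(k-1,r,p)$ exists for all $r$ and $p$, and fix an $r$-coloring $c$ of the $k$-subsets of a large $F$-set $S$. The idea is to extract a long sequence of elements $a_1, a_2, \ldots$ along which the color of any $k$-subset $\{a_{i_1}, \ldots, a_{i_k}\}$ with $i_1 < \cdots < i_k$ depends only on its first $k-1$ elements $a_{i_1}, \ldots, a_{i_{k-1}}$. I would build this sequence greedily together with a nested family of reservoirs $S = B_0 \supseteq B_1 \supseteq \cdots$: at step $i$ pick $a_i \in B_{i-1}$ arbitrarily, then partition $B_{i-1} \setminus \{a_i\}$ into classes by declaring $x$ and $x'$ equivalent when $c(Y \cup \{x\}) = c(Y \cup \{x'\})$ for every $(k-1)$-subset $Y$ of $\{a_1, \ldots, a_i\}$ containing $a_i$, and let $B_i$ be a largest class. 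This maintains the invariant that for any $(k-1)$-subset $Y \subseteq \{a_1, \ldots, a_i\}$ and any later-chosen elements $x, x'$, the colors $c(Y \cup \{x\})$ and $c(Y \cup \{x'\})$ agree.

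Once the sequence is built, I would define an auxiliary $r$-coloring $c'$ of the $(k-1)$-subsets of $\{a_1, a_2, \ldots\}$ by setting $c'(\{a_{i_1}, \ldots, a_{i_{k-1}}\})$ equal to the common value $c(\{a_{i_1}, \ldots, a_{i_{k-1}}\} \cup \{x\})$ taken over all later $x$; this is well defined precisely by the invariant. Applying the induction hypothesis to $c'$, provided the sequence has length at least $F(k-1,r,p)$, yields a $p$-set all of whose $(k-1)$-subsets receive the same $c'$-color; by the invariant, every $k$-subset of this $p$-set then receives the same $c$-color, which is the desired monochromatic $p$-set.

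Finiteness of $F(k,r,p)$ follows from bounding the loss at each refinement: step $i$ splits the reservoir into at most $r^{\binom{i-1}{k-2}}$ classes, so $|B_i| \geq (|B_{i-1}|-1)/r^{\binom{i-1}{k-2}}$, and choosing $|S|$ to be a sufficiently large iterated exponential guarantees that the sequence survives at least $F(k-1,r,p)$ steps; the existence of the \emph{minimum} such $F$ is then immediate from the well-ordering of the positive integers. I expect the main obstacle to be the bookkeeping in the refinement step: one must seed the first $k-1$ elements (for which no $(k-1)$-subset yet exists and no refinement is performed), verify that refining only on the new subsets containing $a_i$ suffices to preserve the invariant for all earlier subsets as well (which holds because the reservoirs are nested), and confirm that the color of a $k$-set is pinned down at the moment its $(k-1)$-st element is chosen.
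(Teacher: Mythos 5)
The paper does not prove this statement: it is Ramsey's classical theorem, quoted with a citation to \cite{Ramsey}, so there is no internal proof to compare against. Your argument is the standard induction on the uniformity $k$ (pigeonhole base case, then the greedy construction of a pre-homogeneous sequence via nested reservoirs, followed by an application of the inductive hypothesis to the induced $(k-1)$-uniform coloring), and it is correct: the invariant is preserved because the maximal index of any $(k-1)$-subset $Y$ determines the step at which the color $c(Y\cup\{x\})$ is frozen for all surviving $x$, and the reservoirs are nested. The only point worth tightening is a trivial off-by-one: $c'$ is undefined on $(k-1)$-subsets whose largest element is the final $a_N$ (no later $x$ remains), so either build the sequence one element longer than $F(k-1,r,p)$ or observe that such subsets never occur as the initial $(k-1)$ elements of a $k$-subset of the monochromatic $p$-set. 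With that noted, the proof is complete and matches the classical Erd\H{o}s--Rado-style presentation.
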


We will refer to the number $F(k,r,p)$ defined in this theorem as the Frank Ramsey number.

It is not difficult to see that with $k=1$ the theorem coincides with the Pigeonhole Principle.
For $k=2$, the theorem admits a nice interpretation in the terminology of graph theory, since 
coloring 2-subsets can be viewed as coloring the edges of a complete graph: 
for any positive integers $r$ and $p$, there is a positive integer $n=n(r,p)$ such that 
if the edges of an $n$-vertex complete graph are colored with $r$ colors, then there is a monochromatic clique of
size $p$, i.e. a clique all of whose edges have the same color.  

In the case of $r=2$ colors, the graph-theoretic interpretation of Ramsey's Theorem can be further rephrased as follows.

\begin{theorem}\label{thm:Ramsey-graph}
For any  positive integer $p$, there is a minimum positive integer $R(p)$ such that every graph
with at least $R(p)$ vertices has either a clique of size $p$ or an independent set of size $p$.
\end{theorem}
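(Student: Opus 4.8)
The plan is to obtain this statement as a direct specialization of Theorem~\ref{thm:Ramsey} with the parameters $k=2$ and $r=2$. First I would make the dictionary between graphs and edge-colorings precise: given a graph $G$ on a vertex set $V$, color each $2$-subset $\{u,v\}$ of $V$ with the first color if $uv\in E(G)$ and with the second color otherwise. Under this correspondence a monochromatic $p$-set in the first color is exactly a clique of size $p$ in $G$, and a monochromatic $p$-set in the second color is exactly an independent set of size $p$, since in the latter case no two of the $p$ vertices are joined by an edge.

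Next I would simply invoke Theorem~\ref{thm:Ramsey} with $k=2$, $r=2$ and the given $p$ to produce the Frank Ramsey number $F(2,2,p)$, and set $R(p):=F(2,2,p)$. Any graph on at least $R(p)$ vertices, viewed through the coloring above, contains a monochromatic $p$-set, which by the dictionary is either a clique or an independent set of size $p$; this is precisely the desired conclusion. The requirement that $R(p)$ be the \emph{minimum} such integer is inherited verbatim from the minimality clause of Theorem~\ref{thm:Ramsey}, so no separate extremal argument is needed.

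There is essentially no obstacle here beyond verifying the coloring correspondence cleanly, which is the one place where a sign error (confusing edges with non-edges, hence cliques with independent sets) could creep in; the rest is a transcription. If one wished to avoid appealing to the full strength of Theorem~\ref{thm:Ramsey}, I would instead give the standard self-contained induction for the two-color case: fix a vertex $v$, split the remaining vertices into its neighbours and its non-neighbours, and apply the pigeonhole principle to force one of these two sets to be large; the inductive hypothesis applied to that set then yields the required clique or independent set after re-inserting $v$. Since Theorem~\ref{thm:Ramsey} is already at our disposal, however, the specialization route is the shorter one and is the one I would present.
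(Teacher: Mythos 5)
Your proposal is correct and matches the paper's treatment exactly: the paper presents this theorem as the $k=2$, $r=2$ specialization of Theorem~\ref{thm:Ramsey}, identifying the two colors with edges and non-edges and noting explicitly that $R(p)=F(2,2,p)$. No further comment is needed.
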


The number $R(p)$  is known as
the {\it symmetric Ramsey number}. In other words, $R(p)=F(2,2,p)$, where $F(k,r,p)$ is the Frank Ramsey number.
Theorem~\ref{thm:Ramsey-graph} also admits a non-symmetric formulation as follows. 

\begin{theorem}\label{thm:Ramsey-graph-n}
For any  positive integers $p$ and $q$, there is a minimum positive integer $R(p,q)$ such that every graph
with at least $R(p,q)$ vertices has either a clique of size $p$ or an independent set of size $q$.
\end{theorem}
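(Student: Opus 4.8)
The plan is to deduce the asymmetric statement directly from the symmetric version, Theorem~\ref{thm:Ramsey-graph}, by a simple monotonicity observation, rather than reproving Ramsey's theorem from scratch. Set $m=\max(p,q)$ and consider the symmetric Ramsey number $R(m)$ supplied by Theorem~\ref{thm:Ramsey-graph}. The key point is that cliques and independent sets are monotone under taking subsets: any clique of size $m$ contains a clique of size $p$ (since $m\ge p$), and any independent set of size $m$ contains an independent set of size $q$ (since $m\ge q$).

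First I would fix $p,q$ and let $G$ be an arbitrary graph on at least $R(m)$ vertices. By Theorem~\ref{thm:Ramsey-graph}, $G$ has a clique of size $m$ or an independent set of size $m$. In the first case, discarding all but $p$ of its vertices yields a clique of size $p$; in the second, discarding all but $q$ of its vertices yields an independent set of size $q$. Hence every graph with at least $R(m)$ vertices has a clique of size $p$ or an independent set of size $q$, which shows that the set $S$ of positive integers $n$ enjoying this property is nonempty. Since $S$ is a nonempty set of positive integers, by the well-ordering principle it has a least element; defining $R(p,q)$ to be that element furnishes the required \emph{minimum} positive integer and simultaneously establishes the bound $R(p,q)\le R(\max(p,q))$.

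Along this route there is essentially no obstacle: the entire combinatorial content has already been absorbed into Theorem~\ref{thm:Ramsey-graph}, and what remains is only the monotonicity remark and an appeal to well-ordering for minimality. The one point deserving care is to state the reduction so that the two outcomes of the symmetric theorem map to the two (distinct) outcomes of the asymmetric one, which is immediate once $m$ is chosen as the maximum rather than, say, the sum.

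If instead one wished to prove the theorem self-containedly, and to obtain the sharper bound $R(p,q)\le R(p-1,q)+R(p,q-1)$, the main step would be the standard Erd\H{o}s--Szekeres recurrence, and this is where the real work would lie. One takes a graph on $R(p-1,q)+R(p,q-1)$ vertices, fixes a vertex $v$, and splits the remaining vertices into the neighbours $N(v)$ and the non-neighbours; the pigeonhole principle then forces $|N(v)|\ge R(p-1,q)$ or the non-neighbour side to have at least $R(p,q-1)$ vertices. In the first case one finds within $N(v)$ either an independent set of size $q$ (done) or a clique of size $p-1$ to which $v$ may be appended; in the second case one finds either a clique of size $p$ (done) or an independent set of size $q-1$ to which $v$ may be appended. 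Organising this as an induction on $p+q$ with base cases $R(p,1)=R(1,q)=1$ would complete the argument.
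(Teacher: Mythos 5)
Your proposal is correct: the reduction via $m=\max(p,q)$ to the symmetric Theorem~\ref{thm:Ramsey-graph}, followed by well-ordering to extract the minimum, is a complete and valid argument. The paper itself offers no proof here --- it simply presents the asymmetric statement as a standard reformulation of the symmetric one --- so your derivation supplies exactly the routine step the paper leaves implicit, and matches its intent.
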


The number $R(p,q)$ is known as  the {\it Ramsey number} $R(p,q)$. In particular, $R(p,p)=R(p)$.

Theorem~\ref{thm:Ramsey-graph-n} (or Theorem~\ref{thm:Ramsey-graph}) allows us to make the following conclusion: if $X$ is a hereditary class
which does not contain a complete graph $K_p$ and an edgeless graph $\overline{K}_q$, then graphs in $X$ have fewer than $R(p,q)$ vertices, i.e. $X$ is finite.
More formally, Theorem~\ref{thm:Ramsey-graph-n} implies the following conclusion.

\begin{theorem}\label{thm:minimal}
The class of complete graphs and the class of edgeless graphs are the only two minimal infinite hereditary classes of graphs. 
\end{theorem}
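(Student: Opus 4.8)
The plan is to establish both halves of the statement: first that the class of complete graphs and the class of edgeless graphs are each minimal infinite hereditary classes, and second that no other minimal infinite hereditary class exists. Throughout, recall that a hereditary class is infinite precisely when it contains graphs on arbitrarily many vertices (since there are only finitely many graphs on any fixed vertex set), and that it is \emph{minimal infinite} when it is infinite yet every proper hereditary subclass of it is finite.

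For the first half, let $\mathcal{K}$ denote the class of complete graphs. It is clearly infinite, and it is hereditary since an induced subgraph of a complete graph is again complete. Suppose $Y \subsetneq \mathcal{K}$ is a proper hereditary subclass. Then $Y$ omits some complete graph $K_p$. Since $K_p$ is an induced subgraph of $K_n$ for every $n \geq p$, heredity forces $Y$ to omit every $K_n$ with $n \geq p$ as well; hence $Y$ contains only complete graphs on fewer than $p$ vertices and is therefore finite. This shows $\mathcal{K}$ is minimal infinite. The identical argument, applied to complements, shows that the class of edgeless graphs is minimal infinite.

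For the second half, let $X$ be an arbitrary minimal infinite hereditary class; I would show $X$ coincides with one of the two classes above. The key tool is the consequence of Theorem~\ref{thm:Ramsey-graph-n} recorded just before the statement: if a hereditary class omits some $K_p$ and some $\overline{K}_q$, then all of its members have fewer than $R(p,q)$ vertices, so the class is finite. Since $X$ is infinite, it cannot simultaneously omit a complete graph and an edgeless graph; hence $X$ contains $K_p$ for every $p$, or $X$ contains $\overline{K}_q$ for every $q$. In the former case $X$ contains the entire class $\mathcal{K}$, which is an infinite hereditary subclass of $X$; by minimality no proper hereditary subclass of $X$ can be infinite, so $\mathcal{K} = X$. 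The latter case yields, symmetrically, that $X$ is the class of edgeless graphs.

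The argument is short because its combinatorial heart --- the finiteness of any hereditary class avoiding both a large clique and a large independent set --- is exactly the Ramsey consequence already in hand. Consequently there is no serious obstacle; the only points requiring care are bookkeeping ones: correctly using the induced-subgraph (rather than subgraph) relation when invoking heredity, and keeping the two logical directions straight, namely that the two named classes genuinely are minimal (they admit no infinite proper hereditary subclass) and that minimality upgrades a containment such as $\mathcal{K} \subseteq X$ to an equality.
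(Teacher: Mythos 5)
Your proposal is correct and follows the same route the paper intends: the paper derives Theorem~\ref{thm:minimal} directly from the observation that a hereditary class omitting some $K_p$ and some $\overline{K}_q$ has all its members on fewer than $R(p,q)$ vertices, which is exactly the Ramsey consequence at the heart of your second half. You merely spell out the routine details (minimality of the two classes themselves, and upgrading containment to equality) that the paper leaves implicit.
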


On the other hand, it is not difficult to see that the reverse is also true: Theorem~\ref{thm:minimal} implies Theorem~\ref{thm:Ramsey-graph-n}.
In other words, these two theorems are equivalent.

\medskip
Theorem~\ref{thm:minimal} characterizes the family of hereditary classes containing graphs with a bounded number of vertices 
in terms of minimal ``forbidden'' elements, i.e. minimal classes where the vertex number is unbounded.  
It turns out that various other parameters admit a similar characterization. For instance, directly from Ramsey's Theorem it follows  that 
\begin{itemize}
\item the class of complete graphs and the class of stars (and all their induced subgraphs) are the only two minimal hereditary classes of graphs of unbounded vertex degree.
\item the class of complete graphs and the class of complete bipartite graphs are the only two minimal hereditary classes of graphs of unbounded
biclique number, where the biclique number of a graph $G$ is the size a maximum complete bipartite (not necessarily induced) subgraph of $G$ with equal parts.  
\end{itemize}

In the next section, we use Ramsey-type arguments in order to characterize several other graph parameters by means of minimal hereditary classes,
where these parameters are unbounded. In addition to the original Ramsey's Theorem we will need its bipartite version, which can be stated as follows. 

\begin{theorem}\label{thm:Ramsey-bipartite}
For every $s$, there is an $n = n(s)$ such that every bipartite graph $G$ with at
least $n$ vertices in each part contains either a biclique with color classes of size $s$ or its bipartite complement. 
\end{theorem}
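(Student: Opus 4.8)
The plan is to prove the statement by an iterative pigeonhole argument, treating the two parts asymmetrically: I will fix only a bounded number of vertices on the $A$-side and repeatedly halve the $B$-side, so that $B$ shrinks by only a constant factor depending on $s$. Write $G=(A,B,E)$ with $|A|,|B|\geq n(s)$, where the bound $n(s)$ will be chosen of the form $s\cdot 2^{2s}$ at the end.

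First I would select any $2s$ vertices $a_1,\dots,a_{2s}$ of $A$ (possible since $n(s)\geq 2s$) and process them one at a time, maintaining a nested chain $B=B_0\supseteq B_1\supseteq\cdots\supseteq B_{2s}$ of subsets of the $B$-side. At step $i$, the vertex $a_i$ splits $B_{i-1}$ into $N(a_i)\cap B_{i-1}$ and $B_{i-1}\setminus N(a_i)$; I take $B_i$ to be the larger of the two parts, so that $|B_i|\geq |B_{i-1}|/2$ and $a_i$ is either adjacent to every vertex of $B_i$ or to none of them. Recording this as the color (``complete'' or ``empty'') of $a_i$, and noting that the color persists on the final set $B_{2s}\subseteq B_i$, I conclude that every $a_i$ is monochromatic with respect to $B_{2s}$, while $|B_{2s}|\geq |B|/2^{2s}$.

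A second, plain application of the pigeonhole principle to the $2s$ recorded colors then yields $s$ vertices among $a_1,\dots,a_{2s}$ sharing the same color; together with any $s$ vertices of $B_{2s}$ they induce a biclique with color classes of size $s$ (if the common color is ``complete'') or its bipartite complement (if ``empty''). Choosing $n(s)=s\cdot 2^{2s}$ makes every size requirement hold, since then $|B_{2s}|\geq |B|/2^{2s}\geq s$ and $|A|\geq n(s)\geq 2s$.

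The one thing to get right is the asymmetry of the two sides. A naive attempt to homogenise \emph{all} of $A$ would force $B$ to shrink by the factor $2^{|A|}$, which cannot be controlled when $|A|$ is only assumed to be at least $n(s)$. The observation that removes this obstacle is that we never need all of $A$ to become monochromatic: fixing just $2s$ vertices already guarantees $s$ of one color, so the $B$-side is halved only a constant number of times and the resulting lower bound $n(s)$ depends on $s$ alone.
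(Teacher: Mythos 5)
Your proof is correct. Note that the paper itself states Theorem~\ref{thm:Ramsey-bipartite} without proof, citing it implicitly as a known bipartite analogue of Ramsey's theorem, so there is no in-paper argument to compare against. Your iterated-halving argument is the standard self-contained proof: each of the $2s$ chosen vertices of $A$ is made complete or anticomplete to a nested sequence of subsets of $B$, each at least half the size of the previous one, so $|B_{2s}|\ge |B|/2^{2s}\ge s$ when $n(s)=s\cdot 2^{2s}$, and the pigeonhole principle extracts $s$ vertices of $A$ of a common type. The resulting structure is exactly a $K_{s,s}$ or its bipartite complement, as required, and your closing remark about why only $2s$ vertices of $A$ (rather than all of $A$) should be homogenised is precisely the right point to make; the argument is complete and the bound $n(s)$ it yields is perfectly adequate for the uses of this theorem elsewhere in the paper (Theorems~\ref{thm:complex} and~\ref{thm:complex-degree}).
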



\section{Graph parameters}
\label{sec:parameters}


\subsection{Independence number, clique number and complex number}


As usual, $\alpha(G)$ stands for the independence number of $G$, i.e. the size of a maximum independent set in $G$, and $\omega(G)$ for the clique number of $G$, i.e. 
the size of a maximum clique in $G$.
Now let us introduce a new parameter:
\begin{itemize}
\item[$c(G)$]= $\min(\alpha(G),\omega(G))$ is the {\it complex number} of $G$. 
\end{itemize}   
  
In what follows we give a Ramsey-type characterization of this parameter, i.e. we characterize it in terms of 
minimal hereditary classes where the parameter is unbounded.  To this end, let us denote by

\begin{itemize}
\item[$\cal S$] the class of graphs partitionable into a clique and a set of isolated vertices. 
Also, let $S_{n}$ be a graph in $\cal S$ with a clique of size $n$ and a set of isolated vertices of size $n$. Obviously, 
$S_{n}$ is $n$-universal for graphs in $\cal S$, i.e. it contains every $n$-vertex graph from $\cal S$ as an induced subgraph. 
\end{itemize}

\begin{theorem}\label{thm:complex}
$\cal S$ and $\overline{\cal S}$ are the only two minimal hereditary classes of graphs of unbounded complex number.
\end{theorem}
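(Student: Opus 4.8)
The plan is to establish three facts: that $\cal S$ (and hence, by symmetry, $\overline{\cal S}$) has unbounded complex number, that $\cal S$ is \emph{minimal} with respect to this property, and that every hereditary class of unbounded complex number contains one of $\cal S$, $\overline{\cal S}$ in its entirety --- the last being the crux. Throughout I would exploit that the complex number is self-complementary: since $\alpha(\overline G)=\omega(G)$ and $\omega(\overline G)=\alpha(G)$, we have $c(\overline G)=c(G)$, and as $\overline{\cal S}$ is by definition the class of complements of graphs in $\cal S$, every statement proved for $\cal S$ transfers verbatim to $\overline{\cal S}$. Unboundedness is immediate: in $S_n=K_n\cup\overline K_n$ one has $\alpha(S_n)=n+1$ and $\omega(S_n)=n$, so $c(S_n)=n\to\infty$.

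For minimality I would first record the structural fact that a graph $G\in\cal S$, say $G=K_a\cup\overline K_b$, satisfies $\omega(G)=a$ and $\alpha(G)=b+1$ whenever $a\ge 1$; consequently $c(G)\ge m$ forces $a\ge m$ and $b\ge m-1$, so that $G$ contains $S_{m-1}$ as an induced subgraph. Now let $Y\subsetneq\cal S$ be a proper hereditary subclass. Because $S_n$ is $n$-universal for $\cal S$, if every $S_n$ belonged to $Y$ then $Y$ would contain all of $\cal S$; hence some $S_k\notin Y$, and by heredity no graph of $Y$ contains $S_k$ as an induced subgraph. By the structural fact, every $G\in Y$ then satisfies $c(G)\le k$, so $Y$ has bounded complex number. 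Thus $\cal S$ is minimal, and by complementation so is $\overline{\cal S}$.

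The main step is to show that if $X$ is hereditary with unbounded complex number, then $\cal S\subseteq X$ or $\overline{\cal S}\subseteq X$. Arguing by contraposition, suppose neither inclusion holds: then $S_m\notin X$ for some $m$ and $\overline{S_{m'}}\notin X$ for some $m'$; setting $N=\max(m,m')$ and using that $S_N$ contains $S_m$ and $\overline{S_N}$ contains $\overline{S_{m'}}$ as induced subgraphs, heredity gives $S_N,\overline{S_N}\notin X$, whence no graph of $X$ contains $S_N$ or $\overline{S_N}$ as an induced subgraph. I claim this bounds $c$ on $X$. Let $n=n(N)$ be the number furnished by the bipartite Ramsey theorem (Theorem~\ref{thm:Ramsey-bipartite}), and take any $G\in X$ with $c(G)\ge n+1$. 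Then $G$ has an independent set $I$ and a clique $C$, each of size $n+1$; as $|I\cap C|\le 1$, I can discard a common vertex to obtain disjoint $I'\subseteq I$, $C'\subseteq C$ with $|I'|=|C'|=n$. Applying Theorem~\ref{thm:Ramsey-bipartite} to the bipartite graph $G[I',C']$ yields subsets $A\subseteq I'$ and $B\subseteq C'$ of size $N$ forming either a biclique or its bipartite complement. Since $A$ is independent and $B$ is a clique in $G$, in the first case $G[A\cup B]$ is the complete join of $\overline K_N$ and $K_N$, namely $\overline{S_N}$, while in the second case it is the disjoint union $\overline K_N\cup K_N=S_N$. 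Either outcome contradicts the choice of $N$, so $c(G)\le n$ for all $G\in X$.

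Finally I would assemble these: given any minimal hereditary class $X$ of unbounded complex number, the main step places $\cal S$ or (symmetrically) $\overline{\cal S}$ inside $X$; since $\cal S$ is itself hereditary with unbounded complex number, minimality of $X$ forces $X=\cal S$ (resp.\ $\overline{\cal S}$). Together with the already-established minimality of $\cal S$ and $\overline{\cal S}$, this identifies them as exactly the two minimal classes. The one genuinely delicate point is the main step: the correct move is to read $c(G)$ as simultaneously supplying a large clique and a large independent set, and then to feed the bipartite graph \emph{between} them into bipartite Ramsey, so that its two possible outcomes correspond precisely to the two induced configurations $S_N$ and $\overline{S_N}$.
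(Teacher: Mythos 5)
Your proposal is correct and follows essentially the same route as the paper: the heart of both arguments is to extract from a graph of large complex number a disjoint large clique and large independent set (discarding the at most one shared vertex) and apply bipartite Ramsey (Theorem~\ref{thm:Ramsey-bipartite}) to the bipartite graph between them, whose two outcomes are exactly induced copies of $\overline{S}_N$ and $S_N$. The only difference is presentational — you run the main step in contrapositive form and spell out the minimality of $\cal S$ and $\overline{\cal S}$, which the paper leaves implicit.
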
  

\begin{proof}
Obviously, the complex number of graphs in $\cal S$ and $\overline{\cal S}$ can be arbitrarily large. 
Conversely, let $X$ be a hereditary class containing a graph $G$ with $c(G)\ge k$ for each value of $k$.
Then $G$ contains a clique $C$ of size $k$ and an independent set $I$ of size $k$. Since $C$ and $I$ 
have at most one vertex in common, we may assume without loss of generality that they are disjoint,
and since $k$ can be arbitrarily large, in the bipartite graph $G[C,I]$
we can find an arbitrarily large biclique or its bipartite complement (Theorem~\ref{thm:Ramsey-bipartite}).  
Therefore, graphs in $X$ contain either $S_{n}$ or $\overline{S}_{n}$  for arbitrarily large values of $n$,
i.e. $X$ contains either $\cal S$ and $\overline{\cal S}$.
\end{proof}  


\subsection{Degree, co-degree and complex degree}
\label{sec:degree}
Let $G$ be a graph and $v$ a vertex of $G$. We denote by $d(v)$ the degree of $v$ and by $\overline{d}(v)$ the {\it co-degree} of $v$, 
i.e. the degree of $v$ in the complement of $G$. The $c$-degree of $v$ is denoted and defined as follows: $cd(v)=\min(d(v),\overline{d}(v))$. 

As usual, $\Delta(G)$ is the maximum vertex degree in  $G$. Also, we denote by $\overline{\Delta}(G)$ the maximum co-degree and 
by $c\Delta(G)$ the maximum $c$-degree in $G$. We call $c\Delta(G)$ the {\it complex degree} in $G$. 
In order to characterize this new parameter  by means of minimal hereditary classes 
where  complex degree is unbounded, let us denote by 

\begin{itemize}
\item[$\cal Q$] the class of graphs whose vertices can be partitioned into a set inducing a star and a set of isolated vertices.
Also, let $Q_{n}$ be a graph in $\cal Q$ whose vertices can be partitioned into an induced star $K_{1,n}$ 
and a set of isolated vertices of size $n$. 
Obviously, $Q_{n}$ is $n$-universal for graphs in $\cal Q$, i.e. it contains every $n$-vertex graph from $\cal Q$ as an induced subgraph. 
\item[$\cal B$] the class of complete bipartite graphs (an edgeless graph is counted as complete bipartite with one part being empty). 
For consistency of notation with previously defined classes, we will denote a biclique (complete bipartite graph) 
with $n$ vertices in each part of its bipartition by $B_n$. 
Clearly, $B_{n}$ is $n$-universal for graphs in $\cal B$, i.e. it contains all $n$-vertex graphs in $\cal B$ as induced subgraphs.  
\end{itemize}

\begin{theorem}\label{thm:complex-degree}
$\cal S$, $\overline{\cal S}$, $\cal Q$, $\overline{\cal Q}$, $\cal B$ and $\overline{\cal B}$ are the only minimal hereditary classes of graphs of unbounded complex degree.
\end{theorem}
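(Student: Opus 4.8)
The plan is to mirror the proof of Theorem~\ref{thm:complex}. The easy direction is to check that all six classes have unbounded complex degree. Since $cd(v)=\min(d(v),\overline{d}(v))$ is symmetric in the degree and the co-degree, the complex degree is invariant under complementation, so it is enough to handle ${\cal S}$, ${\cal Q}$ and ${\cal B}$ and deduce the barred classes for free. A one-line computation on the universal graphs does this: a clique vertex of $S_n$ has complex degree $n-1$, the centre of the star in $Q_n$ has complex degree $n$, and every vertex of $B_n$ has complex degree $n-1$, so $c\Delta(S_n)$, $c\Delta(Q_n)$ and $c\Delta(B_n)$ all tend to infinity.

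For the converse, suppose $X$ is hereditary with unbounded complex degree, so for every $k$ there is $G\in X$ and a vertex $v$ with $d(v)\ge k$ and $\overline{d}(v)\ge k$. Put $A=N(v)$ and $B=V(G)\setminus N[v]$, each of size at least $k$. I would first apply Theorem~\ref{thm:Ramsey-graph} separately inside $A$ and inside $B$ to obtain arbitrarily large homogeneous subsets $A'\subseteq A$ and $B'\subseteq B$, each of which is a clique or an independent set. I would then apply the bipartite Ramsey Theorem~\ref{thm:Ramsey-bipartite} to the bipartite graph between $A'$ and $B'$, passing to still-large subsets across which either all or no edges are present. After these three Ramsey steps the induced subgraph on $\{v\}\cup A'\cup B'$ is fixed by four binary choices, bearing in mind that $v$ is adjacent to all of $A'$ and to none of $B'$.

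It then remains to identify each configuration with one of the six classes. Complementation swaps $A'\leftrightarrow B'$, clique $\leftrightarrow$ independent set and complete $\leftrightarrow$ empty while preserving complex degree, so the eight configurations form four complementary pairs and it suffices to read off one representative of each. When $A'$ and $B'$ are independent with no edges between them, $\{v\}\cup A'$ is a star and $B'$ an isolated set, yielding a member of ${\cal Q}$; when they are independent and completely joined, $A'\cup B'$ is a biclique, a member of ${\cal B}$; when $A'$ is independent, $B'$ a clique, and there are no edges between them, $A'\cup B'$ is a clique plus an isolated set, a member of ${\cal S}$; and when $A'$ is a clique and $B'$ an independent set with no edges between, $(\{v\}\cup A')\cup B'$ is again a clique plus an isolated set, once more a member of ${\cal S}$. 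The four complementary configurations give $\overline{\cal Q}$, $\overline{\cal B}$ and $\overline{\cal S}$ (twice). As each configuration occurs for arbitrarily large $A'$, $B'$ and the classes are hereditary, $X$ contains the whole class in question.

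I expect the one genuinely delicate point to be the bookkeeping around $v$: in two configurations the homogeneous sets alone are not enough. If $A'$ and $B'$ are both cliques joined completely, then $A'\cup B'$ is merely a complete graph, whose complex degree is $0$; one must retain $v$ (adjacent to $A'$ but not $B'$) to recover a graph of $\overline{\cal Q}$, and symmetrically $v$ is indispensable for ${\cal Q}$. Finally, minimality of each class follows from the easy-direction computation run in reverse: in ${\cal S}$ a vertex of complex degree $m$ forces an induced $S_m$, and likewise $Q_m$ in ${\cal Q}$ and $B_m$ in ${\cal B}$ (and their complements). Since $S_n$, $Q_n$, $B_n$ are $n$-universal for their classes, any hereditary subclass of one of them that still has unbounded complex degree contains the generating graphs for all large $m$ and hence equals the whole class; together with the converse direction this shows the six classes are exactly the minimal ones.
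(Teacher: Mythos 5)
Your proposal is correct and follows essentially the same route as the paper: take a vertex $v$ with large degree and co-degree, apply Ramsey's theorem inside the neighbourhood and the non-neighbourhood of $v$, and then the bipartite Ramsey theorem between the two homogeneous sets. Your case analysis is in fact more explicit than the paper's terse one --- in particular, you correctly observe that $v$ itself must be retained to produce $\mathcal{Q}$ and $\overline{\mathcal{Q}}$ in the configurations where the homogeneous sets alone yield only a complete or edgeless graph, a point the paper leaves implicit.
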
  

\begin{proof}
Obviously, the complex degree of graphs in $\cal S$, $\overline{\cal S}$ $\cal Q$, $\overline{\cal Q}$, $\cal B$ and $\overline{\cal B}$ can be arbitrarily large. 
Conversely, let $X$ be a hereditary class containing a graph $G$ with $c\Delta(G)\ge k$ for each value of $k$.
Then $G$ contains a vertex $v$ such that $d(v)\ge k$ and $\overline{d}(v)\ge k$. 
Let $A$ be the set of neighbours of $v$ and $B$ the set of its non-neighbours. Since 
both $A$ and $B$ can be arbitrarily large, each of them contains either a big clique or a big independent set (Theorem~\ref{thm:Ramsey-graph-n}),
and $G[A,B]$ contains either a big biclique or its bipartite complement (Theorem~\ref{thm:Ramsey-bipartite}).
Therefore, graphs in $X$ contain either $S_{n}$ or $\overline{S}_{n}$ or $Q_{n}$ or $\overline{Q}_{n}$ or $B_n$ or $\overline{B}_n$  for arbitrarily large values of $n$.
As a result, $X$ contains at least one of $\cal S$, $\overline{\cal S}$, $\cal Q$, $\overline{\cal Q}$, $\cal B$ and $\overline{\cal B}$.
\end{proof}


\subsection{Matching number, co-matching number and $c$-matching number}
\label{sec:matching}

The {\it matching number} of a graph $G$ is the size of a maximum matching in $G$ and we denote it by $\mu(G)$. 
The {\it co-matching number} of $G$ is the size of a maximum matching in the complement of $G$ and we denote it by $\overline{\mu}(G)$. 
The {\it $c$-matching number} of $G$ is defined and denoted as follows: $c\mu(G)=\min(\mu(G),\overline{\mu}(G))$.
In this section, we characterize all three parameters in terms of minimal hereditary classes where these parameters are unbounded.
To this end, let us denote by  

\begin{itemize}
\item[$\cal M$] the class of graphs of vertex degree at most 1.
By $M_n$ we denote an induced matching of size $n$, i.e. the unique up to isomorphism graph from this class with $2n$ vertices each of which has degree 1. 
Clearly, $M_{n}$ is $n$-universal for graphs in $\cal M$.
\item[$\cal Z$] the class of chain graphs. These are bipartite graphs in which the vertices in each color class can be linearly ordered under inclusion of their neighbourhoods, i.e. the neighbourhoods form a chain. 
By $Z_n$ we denote a chain graph such that for each $i\in \{1,2,\ldots,n\}$,  each part of the graph contains exactly one vertex of degree $i$.
Figure~\ref{fig:B5} represents the graph $Z_n$ for $n=5$. It is known \cite{chain-uni} that $Z_n$ is $n$-universal for graphs in $\cal Z$.
\end{itemize}

\begin{figure}[ht]
\begin{center}
\begin{picture}(300,60)
\put(50,0){\circle*{5}}
\put(100,0){\circle*{5}}
\put(150,0){\circle*{5}}
\put(200,0){\circle*{5}}
\put(250,0){\circle*{5}}
\put(48,53){$y_1$}
\put(98,53){$y_2$}
\put(148,53){$y_3$}
\put(198,53){$y_4$}
\put(248,53){$y_5$}



\put(48,-12){$x_1$}
\put(98,-12){$x_2$}
\put(148,-12){$x_3$}
\put(198,-12){$x_4$}
\put(248,-12){$x_5$}
\put(50,50){\circle*{5}}
\put(100,50){\circle*{5}}
\put(150,50){\circle*{5}}
\put(200,50){\circle*{5}}
\put(250,50){\circle*{5}}

\put(50,0){\line(0,1){50}}
\put(250,0){\line(-1,1){50}}
\put(250,0){\line(-2,1){100}}
\put(250,0){\line(-3,1){150}}
\put(250,0){\line(-4,1){200}}
\put(100,0){\line(0,1){50}}
\put(200,0){\line(-1,1){50}}
\put(200,0){\line(-2,1){100}}
\put(200,0){\line(-3,1){150}}
\put(150,0){\line(0,1){50}}
\put(150,0){\line(-1,1){50}}
\put(150,0){\line(-2,1){100}}
\put(200,0){\line(0,1){50}}
\put(100,0){\line(-1,1){50}}
\put(250,0){\line(0,1){50}}

\end{picture}
\end{center}
\caption{The graph $Z_5$}
\label{fig:B5}
\end{figure}
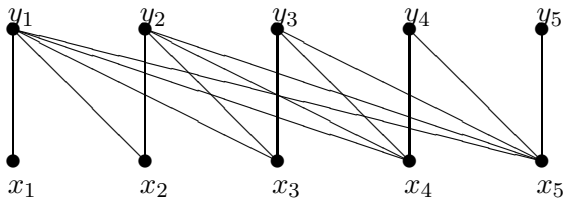

\begin{lemma}\label{lem:matching}
For any positive integers $s,t$, there exists a positive integer $q=q(s,t)$ such that every bipartite graph $G$ with a matching of size $q$ 
contains either an induced $M_s$ or an induced $B_{t}$.  
\end{lemma}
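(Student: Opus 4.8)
The plan is to apply a Ramsey argument to the edges of a fixed matching of size $q$. Fix any bipartition $G=(A,B,E)$; since $G$ is bipartite, every matching edge has one endpoint in $A$ and one in $B$, so a matching of size $q$ yields distinct vertices $a_1,\dots,a_q\in A$ and $b_1,\dots,b_q\in B$ with $a_ib_i\in E$. Order the matching edges $1<2<\dots<q$. The key observation is that for any two matching edges $i<j$, the only edges of $G$ that can occur among the four vertices $a_i,a_j,b_i,b_j$ besides $a_ib_i$ and $a_jb_j$ are the two ``cross'' edges $a_ib_j$ and $a_jb_i$, because the pairs $a_ia_j$ and $b_ib_j$ lie inside a single part and hence are non-edges. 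I would therefore colour each $2$-subset $\{i,j\}$ with $i<j$ by one of four colours recording the pair (presence of $a_ib_j$, presence of $a_jb_i$) in $\{0,1\}^2$.

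By Ramsey's Theorem (Theorem~\ref{thm:Ramsey} with $k=2$ and $r=4$), setting $q=q(s,t):=F(2,4,p)$ for a suitable $p$ to be fixed below, I obtain a monochromatic set $S$ of $p$ matching edges, which I relabel $1,\dots,p$ in increasing order. It then remains to show that, whatever the common colour of $S$ is, the $2p$ endpoints contain an induced $M_s$ or an induced $B_t$.

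The two symmetric colours are immediate: if both cross edges are always absent, then $a_kb_l\in E$ iff $k=l$, so the subgraph induced by $\{a_1,\dots,a_p,b_1,\dots,b_p\}$ is exactly an induced matching $M_p$; if both cross edges are always present, then $a_kb_l\in E$ for all $k,l$, giving an induced biclique $B_p$. The real difficulty---and the step I expect to be the main obstacle---is the two ``mixed'' colours, which produce a half-graph (chain graph) rather than a matching or a biclique. Suppose, for instance, that $a_ib_j\in E$ and $a_jb_i\notin E$ for all $i<j$ in $S$; then $a_k$ is adjacent to exactly $b_k,b_{k+1},\dots,b_p$, so the neighbourhoods are nested and the induced subgraph, being a chain graph, contains no induced $M_2$ whatsoever. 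Nevertheless such a chain graph still contains a large induced biclique: choosing the $a$-vertices indexed by $\{1,\dots,t\}$ together with the $b$-vertices indexed by $\{p-t+1,\dots,p\}$ gives $a_kb_l\in E$ for every selected pair, since $l\ge p-t+1\ge t\ge k$ once $p\ge 2t-1$, i.e. an induced $B_t$. The symmetric mixed colour is handled identically with the two index ranges interchanged.

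Collecting the requirements ($p\ge s$ for the matching case, $p\ge t$ for the biclique case, and $p\ge 2t-1$ for the two chain cases), it suffices to take $p=\max(s,2t-1)$ and $q(s,t)=F(2,4,p)$, which completes the argument.
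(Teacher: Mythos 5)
Your proposal is correct and follows essentially the same route as the paper: colour pairs of matching edges with four colours according to the two possible cross edges, apply Ramsey's Theorem, and observe that the two mixed colours yield a chain graph that still contains a large induced biclique. The only difference is bookkeeping of constants (the paper takes $m=2\max(s,t)$ and extracts $B_{m/2}$ from the resulting $Z_m$, whereas you take $p=\max(s,2t-1)$ and extract $B_t$ directly), which is immaterial.
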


\begin{proof}
Let us denote $m=2\max(s,t)$ and $q=F(2,4,m)$, where $F(k,r,p)$ is the Frank Ramsey number (Theorem~\ref{thm:Ramsey}).
Consider a matching $M=\{x_1y_1,\ldots,x_qy_q\}$ of size $q$. 
We color each pair $x_iy_i,x_jy_j$ $(i<j)$ of edges of $M$  in one of the four colors as follows:
\begin{itemize}
\item color 1 if $G$ contains no edges between $x_iy_i$ and $x_jy_j$,
\item color 2  if $G$ contains both possible edges between $x_iy_i$ and $x_jy_j$,
\item color 3 if $G$ contains the edge $x_iy_j$ but not the edge $y_ix_j$,
\item color 4 if $G$ contains the edge $y_ix_j$ but not the edge $x_iy_j$.
\end{itemize} 
By Ramsey's Theorem, $M$ contains a monochromatic set $M'$ of  edges of size $m$. If the color of each pair in $M'$ is
\begin{itemize}
\item[1,] then $M'$ is an induced matching of size $m\ge 2s>s$, 
\item[2,] then the vertices of $M'$ induce a complete bipartite graph $B_{m}$ with $m\ge 2t>t$, 
\item[3] or 4, then the vertices of $M'$ induce a $Z_{m}$ and hence $G$ contains a complete bipartite graph $B_{m/2}$ with $m/2\ge t$. 
\end{itemize}~\end{proof}

\begin{lemma}\label{lem:m}
For any natural $s,t,p$, there exists a $Q=Q(s,t,p)$ such that every graph $G$ with a matching of size $Q$ 
contains either an induced $M_s$ or an induced $B_{t}$ or a clique $K_p$.  
\end{lemma}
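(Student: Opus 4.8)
The plan is to reduce the general case to the bipartite situation already settled in Lemma~\ref{lem:matching}, using Theorem~\ref{thm:Ramsey-graph-n} twice to force the two sets of matching-endpoints to become independent sets. Write the given matching as $M=\{x_1y_1,\ldots,x_Qy_Q\}$, where all $2Q$ endpoints are distinct.

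First I would apply Theorem~\ref{thm:Ramsey-graph-n} to the set $\{x_1,\ldots,x_Q\}$: choosing $Q$ large enough, either these vertices contain a clique of size $p$ --- giving the desired $K_p$ --- or they contain an independent set of size $N$ for a suitably chosen $N$. Passing to the sub-matching indexed by such an independent set, I may assume that $\{x_1,\ldots,x_N\}$ is independent. Repeating the same argument on the corresponding set $\{y_1,\ldots,y_N\}$ of opposite endpoints, either I again find $K_p$, or I obtain a further sub-matching of size $q$ whose $y$-endpoints are independent; the matching $x$-endpoints of this sub-matching remain independent, being a subset of an independent set.

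After these two steps I have a sub-matching $x_iy_i$, $i\in J$, with $|J|=q$, such that both $X=\{x_i:i\in J\}$ and $Y=\{y_i:i\in J\}$ are independent sets of $G$. Then the induced subgraph $G[X\cup Y]$ has all of its edges running between $X$ and $Y$, so it is a bipartite graph with parts $X$ and $Y$, and it contains the matching $\{x_iy_i:i\in J\}$ of size $q$. Applying Lemma~\ref{lem:matching} with $q=q(s,t)$ produces an induced $M_s$ or an induced $B_t$ inside $G[X\cup Y]$. This also fixes the bound: take $q=q(s,t)$ from Lemma~\ref{lem:matching}, then $N=R(p,q)$, and finally $Q=R(p,N)$.

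The only point requiring care is that the copies of $M_s$ and $B_t$ delivered by Lemma~\ref{lem:matching} must be induced in $G$ itself, not merely in the auxiliary bipartite subgraph. This is immediate once one observes that $G[X\cup Y]$ is an \emph{induced} subgraph of $G$, so that an induced subgraph of $G[X\cup Y]$ is automatically an induced subgraph of $G$. The real content of the argument is therefore the two Ramsey reductions, which are exactly what guarantee that $G[X\cup Y]$ is genuinely bipartite --- the hypothesis that Lemma~\ref{lem:matching} needs.
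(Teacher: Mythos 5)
Your proof is correct and follows essentially the same route as the paper: two applications of the (non-symmetric) Ramsey theorem to the two sides of the matching to either extract a $K_p$ or force both endpoint sets to be independent, followed by an appeal to Lemma~\ref{lem:matching} on the resulting induced bipartite subgraph, with the same bound $Q=R(p,R(p,q(s,t)))$. Nothing further is needed.
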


\begin{proof}
Let us denote $Q=R(p,R(p,q))$, where $R$ is the (non-symmetric) Ramsey number and  
$q=F(2,4,2\max(s,t))$ is  the value defined in the proof of Lemma~\ref{lem:matching}.
We consider a matching $M$ of size $Q$ in $G$ and color the endpoints of each edge of $M$ in two colors, say white and black, arbitrarily. 
Since the set of white vertices has size $Q$, it must contain either a clique $K_p$, in which case we are done, or an independent set $A$ of size $R(p,q)$.
In the latter case, we look at the black vertices matched with the vertices of $A$. According to the size of this set, 
it must contain either a clique $K_p$, in which case we are done, or an independent set $A'$ of size $q$. In the latter case, we denote by $A''$ the set
of white vertices matched with the vertices of $A'$. Then $A'$ and $A''$ induce a bipartite graph with a matching of size $q$,
in which case, by Lemma~\ref{lem:matching}, $G$ contains either an induced matching of size $s$ or an induced complete bipartite graph $B_{t}$.  
\end{proof}

\medskip
The above sequence of results allows us to make the following conclusions, the first two of which follow directly from Lemma~\ref{lem:m}.

\begin{theorem}
$\cal M$, $\cal B$ and the class of complete graphs are the only three minimal hereditary classes of graphs of unbounded matching number.  
\end{theorem}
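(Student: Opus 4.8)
The plan is to derive the theorem directly from Lemma~\ref{lem:m}, in the same spirit as the proofs of Theorems~\ref{thm:complex} and~\ref{thm:complex-degree}. The argument splits naturally into two directions: first, that every hereditary class of unbounded matching number must contain at least one of $\cal M$, $\cal B$, or the complete graphs; and second, that each of these three classes is itself minimal with respect to having unbounded matching number. Putting the two together shows that these three are precisely the minimal such classes.

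For the containment direction, suppose $X$ is a hereditary class of unbounded matching number. Then for every $n$ the class $X$ contains a graph $G_n$ with $\mu(G_n)\ge Q(n,n,n)$, where $Q$ is the function supplied by Lemma~\ref{lem:m}. Applying that lemma to each $G_n$ yields an induced $M_n$, an induced $B_n$, or a clique $K_n$ inside $G_n$, and hence, since $X$ is hereditary, inside $X$ itself. As there are only three alternatives but infinitely many values of $n$, one alternative must occur for arbitrarily large $n$ by the pigeonhole principle. Because $M_n$, $B_n$ and $K_n$ are $n$-universal for $\cal M$, $\cal B$ and the complete graphs respectively, and because $M_m$, $B_m$, $K_m$ are induced subgraphs of $M_n$, $B_n$, $K_n$ for $m\le n$, the presence of arbitrarily large members of one of these families inside the hereditary class $X$ forces $X$ to contain the whole corresponding class. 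This delivers exactly the required conclusion.

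The two easy tasks then complete the proof. Each of $\cal M$, $\cal B$ and the complete graphs has unbounded matching number, since $M_n$, $B_n$ and $K_{2n}$ each possess a matching of size $n$. For minimality, I would take a hereditary subclass $Y$ with unbounded matching number of each of the three classes and show $Y$ recovers the whole class: a graph of $\cal M$ with a matching of size $n$ has maximum degree at most $1$ and therefore \emph{is} an induced $M_n$ together with isolated vertices, so $Y$ contains every $M_n$ and hence (absorbing isolated vertices into unused matched endpoints) all of $\cal M$; a complete bipartite graph with a matching of size $n$ contains $B_n=K_{n,n}$, and every member of $\cal B$ embeds into some $B_n$; and a complete graph with a matching of size $n$ already contains $K_n$, while every complete graph embeds into a larger one. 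I expect the pigeonhole-and-universality passage of the containment direction to be the genuine crux, with the only delicate point being the monotonicity and ``absorption'' facts used to pass from arbitrarily large universal members to containment of the entire class, which require a little care for $\cal M$ because of the isolated vertices; the minimality verifications are routine.
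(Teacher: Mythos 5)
Your proposal is correct and follows exactly the route the paper intends: the paper states that this theorem ``follows directly from Lemma~\ref{lem:m}'', and your argument is precisely the fleshing-out of that derivation (apply the lemma with large parameters, pigeonhole over the three outcomes, use universality and monotonicity of $M_n$, $B_n$, $K_n$ to obtain the whole class, then check minimality). The details you flag as delicate, including the absorption of isolated vertices for $\cal M$, are handled correctly.
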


\begin{theorem}
$\overline{\cal M}$, $\overline{\cal B}$ and the class of edgeless graphs are 
the only three minimal hereditary classes of graphs of unbounded co-matching number.  
\end{theorem}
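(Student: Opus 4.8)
The plan is to deduce this theorem from the preceding one on matching number by complementation, exploiting the identity $\overline{\mu}(G)=\mu(\overline{G})$. The key structural observation is that the map $X\mapsto\overline{X}$, sending a class of graphs to the class of complements of its members, is an involution on the collection of hereditary classes: a graph $H$ is an induced subgraph of $G$ if and only if $\overline{H}$ is an induced subgraph of $\overline{G}$, so $X$ is hereditary precisely when $\overline{X}$ is, and $\overline{\overline{X}}=X$. Moreover this map preserves inclusion in both directions, namely $X\subseteq Y$ if and only if $\overline{X}\subseteq\overline{Y}$, so it is an order isomorphism of the lattice of hereditary classes onto itself.

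First I would note that, because $\overline{\mu}(G)=\mu(\overline{G})$, a hereditary class $X$ has unbounded co-matching number if and only if $\overline{X}$ has unbounded matching number. Second, since complementation is an inclusion-preserving bijection of the hereditary classes onto themselves, it carries minimal classes to minimal classes: $X$ is minimal with respect to having unbounded co-matching number if and only if $\overline{X}$ is minimal with respect to having unbounded matching number. Combining these two facts, the minimal hereditary classes of unbounded co-matching number are exactly the complements of the minimal hereditary classes of unbounded matching number, and there are no others.

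Finally I would invoke the previous theorem, which identifies the minimal hereditary classes of unbounded matching number as $\cal M$, $\cal B$ and the class of complete graphs, and then read off their complements: these are $\overline{\cal M}$, $\overline{\cal B}$, and the complement of the class of complete graphs, which is the class of edgeless graphs. This yields precisely the three claimed minimal classes. The only point requiring any attention — and it is routine bookkeeping rather than a genuine obstacle — is confirming from the definitions that the complement of the complete graphs is the edgeless graphs, and that $\overline{\cal M}$ and $\overline{\cal B}$ are the correct complementary classes; no new Ramsey-type argument is needed, as all the combinatorial content has already been supplied by Lemma~\ref{lem:m} through the matching-number theorem.
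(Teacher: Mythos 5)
Your proposal is correct and takes essentially the same route as the paper, which offers no explicit argument but states that the result ``follows directly from Lemma~\ref{lem:m}'' --- that is, by applying the lemma to complements, which is precisely the complementation argument you spell out. The bookkeeping you flag (that complementation is an inclusion-preserving involution on hereditary classes carrying $\cal M$, $\cal B$ and the complete graphs to $\overline{\cal M}$, $\overline{\cal B}$ and the edgeless graphs, and that $\overline{\mu}(G)=\mu(\overline{G})$) is exactly what the paper leaves implicit.
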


\begin{theorem}\label{thm:c-matching}
$\cal M$, $\cal B$,  $\cal S$, $\overline{\cal M}$, $\overline{\cal B}$ and $\overline{\cal S}$ are 
the only six minimal hereditary classes of graphs of unbounded $c$-matching number.  
\end{theorem}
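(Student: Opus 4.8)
The plan is to prove both inclusions: that each of the six listed classes has unbounded $c$-matching number, and that these are the \emph{only} minimal ones, i.e. every hereditary class with unbounded $c$-matching number contains at least one of them.

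For the first (easy) direction I would begin with the observation that the $c$-matching number is invariant under complementation, since $c\mu(\overline{G})=\min(\mu(\overline{G}),\overline{\mu}(\overline{G}))=\min(\overline{\mu}(G),\mu(G))=c\mu(G)$. Hence it suffices to check that $\cal M$, $\cal B$ and $\cal S$ have unbounded $c$-matching number, the statement for $\overline{\cal M}$, $\overline{\cal B}$, $\overline{\cal S}$ then being automatic. For $M_n$ one has $\mu(M_n)=n$ and $\overline{\mu}(M_n)=n$ (the complement of an induced matching on $2n$ vertices has a perfect matching); for $B_n$ one has $\mu(B_n)=n$ and $\overline{\mu}(B_n)\ge n-1$ (its complement is a disjoint union of two $n$-cliques); for $S_n$ one has $\mu(S_n)=\lfloor n/2\rfloor$ from the clique and $\overline{\mu}(S_n)=n$ (the complement is a split graph whose complete bipartite part carries a matching of size $n$). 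In each case $c\mu$ grows with $n$. Minimality then follows from universality: in a proper hereditary subclass some $M_k$ (resp. $B_k$, $S_k$) is absent, and a short case check shows this bounds $\mu$ or $\overline{\mu}$, and hence $c\mu$, by the constant $k$.

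The substance is the converse. Let $X$ be hereditary with $c\mu$ unbounded, so for every $k$ there is $G\in X$ with $\mu(G)\ge k$ and $\overline{\mu}(G)\ge k$. Fixing a target $n$, the idea is to apply Lemma~\ref{lem:m} twice: once to $G$, using that its matching is large, and once to $\overline{G}$, using that $\overline{\mu}(G)=\mu(\overline{G})$ is large. Choosing $k$ large relative to the bound $Q(n,n,P)$, Lemma~\ref{lem:m} forces $G$ to contain an induced $M_n$, an induced $B_n$, or a clique $K_P$, and applied to $\overline{G}$ it forces $G$ to contain an induced $\overline{M}_n$, an induced $\overline{B}_n$, or an independent set $\overline{K}_P$. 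If any of $M_n$, $B_n$, $\overline{M}_n$, $\overline{B}_n$ appears we are already inside one of the four classes $\cal M$, $\cal B$, $\overline{\cal M}$, $\overline{\cal B}$. In the only remaining case $G$ contains simultaneously a clique and an independent set of size $P$, so $c(G)=\min(\alpha(G),\omega(G))\ge P$; taking $P$ large enough, I would reduce directly to the proof of Theorem~\ref{thm:complex}: the clique and the independent set meet in at most one vertex, so the bipartite graph between them contains, by the bipartite Ramsey theorem (Theorem~\ref{thm:Ramsey-bipartite}), a large biclique or its bipartite complement, yielding an induced $\overline{S}_n$ or an induced $S_n$.

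The main obstacle is the bookkeeping of quantifiers. First one must choose $P$ as a function of $n$ large enough that the $S/\overline{S}$ extraction succeeds. Second, and more importantly, the argument above produces, for each $n$, \emph{one of six} $n$-universal graphs, and this must be upgraded to the containment of one entire class. Since there are only six possibilities, by the pigeonhole principle one fixed type occurs for arbitrarily large $n$; by the universality of that graph and the heredity of $X$, the whole corresponding class lies in $X$. Everything else is a routine combination of Lemma~\ref{lem:m}, the complement-invariance of $c\mu$, and the clique/independent-set argument already carried out in Theorem~\ref{thm:complex}.
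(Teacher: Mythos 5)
Your proposal is correct and follows essentially the same route as the paper: apply Lemma~\ref{lem:m} to $G$ (via the large matching) and to $\overline{G}$ (via the large co-matching), and in the residual case where both a large clique and a large independent set appear, fall back on the complex-number argument of Theorem~\ref{thm:complex} to extract $\cal S$ or $\overline{\cal S}$. The extra care you take with the quantifiers and the pigeonhole step over the six outcome types is implicit in the paper's shorter write-up but not a difference in method.
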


\begin{proof}
Clearly, graphs in $\cal M$, $\cal B$,  $\cal S$, $\overline{\cal M}$, $\overline{\cal B}$ and $\overline{\cal S}$ 
can have arbitrarily large $c$-matching number.  
Conversely, let $X$ be a hereditary class with unbounded $c$-matching number. Assume $X$ contains none of 
$\cal M$, $\cal B$,  $\overline{\cal M}$, $\overline{\cal B}$,
i.e. there is a value of $p$ such that none of $M_p$, $B_{p}$, $\overline{M}_p$, $\overline{B}_{p}$ belongs to $X$. 
By assumption, $X$ contains a graph $G$ with $c\mu(G)\ge k$ for each value of $k$, i.e. $G$ contains a matching and a co-matching of size $k$.
Since $k$ can be arbitrarily large and $M_p$, $B_{p}$ are forbidden, $G$ contains a large clique (Lemma~\ref{lem:m}). 
Similarly, $G$ contains a large independent set. Therefore, $X$ contains graphs with arbitrarily large complex number. But then $X$ contains 
either $\cal S$ or $\overline{\cal S}$ (Theorem~\ref{thm:complex}).
\end{proof}


\subsection{Neighbourhood diversity}
\label{sec:diversity}


\begin{definition}
Let us say that two vertices $x$ and $y$ are {\it similar} if there is no vertex $z$ distinguishing them
(i.e. if there is no vertex $z$ adjacent to exactly one of $x$ and $y$). Clearly, the similarity is an equivalence relation. We denote by 
$nd(G)$ the number of similarity class in $G$ and call it the {\it neighbourhood diversity} of $G$.
\end{definition}

In order to characterize the neighbourhood diversity by means of minimal hereditary classes of graphs where this parameter is unbounded,
we denote by 

\begin{itemize}
\item[${\cal M}^{bc}$] the class of bipartite complements of graphs in $\cal M$. The bipartite complement of the graph $M_n$ will be denoted ${M}^{bc}_n$.
Clearly, ${M}^{bc}_n$ is $n$-universal for graphs in ${\cal M}^{bc}$.
\item[${\cal M}^*$] the class of split graphs obtained from graphs in ${\cal M}$ by creating a clique in one of the color classes. The graph obtained from $M_n$ 
by  creating clique in one its color classes will be denoted by $M_n^*$. 
Clearly, $M_n^*$ is $n$-universal for graphs in ${\cal M}^*$.
\item[${\cal Z}^*$] the class of split graphs obtained from graphs in ${\cal Z}$ by creating a clique in one of the color classes. 
This class is known in the literature as the class of {\it threshold graphs}. The graph obtained from $Z_n$  by creating a clique in one of its color classes will be denoted 
$Z^*_n$.  This graph is $n$-universal for threshold graphs \cite{threshold}.
\end{itemize}

Before we provide a Ramsey-type characterization of the neighbourhood diversity, we introduce an auxiliary parameter.

\begin{definition}
A {\it skew matching} in a graph $G$ is a matching $\{x_1y_1,\ldots,x_qy_q\}$ such that $y_i$ is not adjacent to $x_j$ for all $i<j$.
The {\it complement of a skew matching} is a sequence of pairs of vertices that create a skew matching in the complement of $G$.
\end{definition}

\begin{lemma}\label{lem:skew}
For any positive integer $m$, there exists a positive integer $r=r(m)$ such that any bipartite graph $G=(A,B,E)$ of neighbourhood diversity $r$
contains either a skew matching of size $m$ or its complement.
\end{lemma}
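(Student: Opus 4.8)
The plan is to reduce the statement to a purely set-theoretic claim about the neighbourhoods in one part of $G$, and then extract the skew matching (or its complement) by a binary-splitting construction rather than by a single application of Ramsey to pairs.

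First I would record how neighbourhood diversity behaves in the bipartite setting. Two vertices lying in the same part of $G=(A,B,E)$ are similar precisely when they have the same neighbourhood, while vertices from different parts can be similar only in the degenerate cases of isolated vertices or isolated edges; a routine accounting then shows that the number of distinct neighbourhoods realised in $A$ together with those realised in $B$ is at least $nd(G)-1$. Hence, for $r$ large, at least one part, say $A$, carries a set $A'$ of $p\ge(r-1)/2$ vertices with pairwise distinct neighbourhoods. Writing $S_a=N(a)\subseteq B$ for $a\in A'$, the task becomes: from a family $\mathcal F=\{S_a:a\in A'\}$ of $p$ distinct subsets of $B$, produce a subfamily that can be linearly ordered so that each set owns an element avoiding all later sets (taking the corresponding $a$'s as the $x_i$ and the owned elements as the $y_i$, this is exactly a skew matching), or dually one in which each set misses an element common to all later sets (this is exactly the complement of a skew matching).

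The heart of the argument is a splitting construction that forces a single coordinate to separate a chosen set from every later set simultaneously. Starting from $\mathcal F$, I repeatedly pick an element $b_i\in B$ distinguishing two sets of the current subfamily, which therefore splits it into the sets containing $b_i$ and those avoiding it, both parts nonempty; I keep the larger part as the new current subfamily and store one set $W_i$ from the smaller part. Since the current subfamily shrinks by at most a factor of two at each step, the process runs for $L\ge\log_2 p$ steps and ends at a single surviving set $T^{*}$. The elements $b_1,\dots,b_L$ are automatically distinct (each lies in the symmetric difference of sets that still agree on all earlier $b$'s), and each stored witness $W_i$ agrees with $T^{*}$ on $b_1,\dots,b_{i-1}$ while differing from it at $b_i$. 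Consequently, for every pair $i<j$ the set $W_j$ has the same $b_i$-membership as $T^{*}$, whereas $W_i$ has the opposite one.

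It then remains to read off the pattern. Split the indices according to whether $b_i\in T^{*}$; the larger class has size at least $L/2$. For indices in that class with $b_i\notin T^{*}$, each $W_i$ contains $b_i$ while every later $W_j$ avoids it, so ordering these sets by index yields a skew matching of size $\ge L/2$; if instead $b_i\in T^{*}$ throughout, the dual pattern yields the complement of a skew matching of the same size. Translating back through $S_a=N(a)$ and choosing $r$ so that $\tfrac12\log_2\tfrac{r-1}{2}\ge m$ completes the proof. The main obstacle, and the reason a one-shot Ramsey colouring of pairs does not suffice, is precisely the ``union'' requirement hidden in the definition of a skew matching: a witness element must beat all later sets at once, not merely pairwise, and it is the nested agreement produced by the splitting construction that upgrades pairwise separation into separation against an entire tail.
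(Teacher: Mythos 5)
Your proposal is correct and is essentially the paper's own argument in set-system clothing: the iterative split of the family of neighbourhoods by a distinguishing element $b_i$, keeping the majority side and storing a minority witness $W_i$, is exactly the paper's procedure of halving the set $D$ of pairwise non-similar vertices via a distinguishing vertex $y_i$ classified as big or small, followed by the same majority vote on the labels. The only differences are cosmetic (running the process to exhaustion for $L\ge\log_2 p$ steps rather than for a fixed $2m-1$ steps, and spelling out the accounting that one part carries many distinct neighbourhoods, which the paper leaves implicit).
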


\begin{proof}
Define $r=2^{2m}$ and let $D$ be a set of pairwise non-similar vertices of size $r/2$ chosen from the same color class of $G$, say from $A$.
Let $y_1$ be a vertex in $B$ distinguishing the set $D$ (i.e. $y_1$ has both a neighbour and a non-neighbour in $D$) and 
let us say that $y_1$ is {\it big} if the number of its neighbours in $D$ is larger than the number of its non-neighbours in $D$,
and {\it small} otherwise. If 
\begin{itemize}
\item[$y_1$] is small, we arbitrarily choose its neighbour in $D$, denote it by $x_1$ and remove all neighbours of $y_1$ from $D$.
\item[$y_1$] is big, we arbitrarily choose a non-neighbour of $y_1$ in $D$, denote it by $x_1$ and remove all non-neighbours of $y_1$ from $D$.
\end{itemize}
Observe that $y_1$ does not distinguish the vertices in the updated set $D$. 

We apply the above procedure to the set $D$ $2m-1$ times and obtain in this way a sequence of $2m-1$ pairs $x_iy_i$.
If $m$ of these pairs contain small vertices $y_i$, then these pairs create a skew matching (of size $m$).
Otherwise, there is a set of $m$ pairs containing big vertices $y_i$, in which case these pairs create the complement of a skew matching.
\end{proof}

\begin{lemma}\label{lem:nd}
For any positive integer $p$, there exists a positive integer $q=q(p)$ such that any bipartite graph $G=(A,B,E)$ of neighbourhood diversity $q$
contains either an induced $M_p$ or an induced $Z_p$  or an induced $M^{bc}_p$.
\end{lemma}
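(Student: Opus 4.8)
The plan is to apply Lemma~\ref{lem:skew} to reduce the problem to a highly structured configuration and then extract the required induced subgraph by a single application of Ramsey's Theorem. Concretely, I set $m=R(p+1)=F(2,2,p+1)$ (the symmetric Ramsey number, Theorem~\ref{thm:Ramsey-graph}) and choose $q=r(m)$ as provided by Lemma~\ref{lem:skew}. Then any bipartite $G=(A,B,E)$ of neighbourhood diversity $q$ contains either a skew matching of size $m$ or the complement of a skew matching of size $m$. I would treat these two cases separately, by essentially the same argument.

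Consider first a skew matching $\{x_1y_1,\dots,x_my_m\}$ with $x_i\in A$ and $y_i\in B$. By definition $x_iy_i\in E$ for all $i$, while the skew condition forces $x_a\not\sim y_b$ whenever $a>b$; the only adjacencies left undetermined are the pairs $x_iy_j$ with $i<j$. I would $2$-colour each pair $\{i,j\}$ with $i<j$ according to whether the edge $x_iy_j$ is present, and apply Ramsey to obtain a monochromatic index set of size $p+1$. On the ``absent'' colour the selected indices retain only their matching edges, yielding an induced $M_{p+1}\supseteq M_p$; on the ``present'' colour the selected pairs satisfy $x_{i_a}\sim y_{i_b}$ exactly when $a\le b$, which is a half graph, i.e. an induced $Z_{p+1}$ containing $Z_p$.

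The complement case is symmetric: here the pairs $x_iy_i$ are non-edges of $G$ while $x_a\sim y_b$ is forced for $a>b$, and again only the pairs $x_iy_j$ with $i<j$ are free. The same colouring and Ramsey step give, on the ``present'' colour, a configuration in which $x_{i_a}\sim y_{i_b}$ for all $a\ne b$ but not for $a=b$ --- precisely an induced $M^{bc}_{p+1}\supseteq M^{bc}_p$ --- and, on the ``absent'' colour, the half graph $x_{i_a}\sim y_{i_b}$ iff $a>b$, which again contains an induced $Z_p$ after discarding the single vertex of degree $0$.

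The routine but delicate part is the bookkeeping: one must pin down which cross-adjacencies are forced by the (complement of a) skew matching and which remain free, and then verify that each of the two monochromatic outcomes is exactly one of $M_p$, $Z_p$, $M^{bc}_p$. The principal subtlety I expect is the identification of the monochromatic half graph with $Z_p$: reading off the degree sequences shows the half graph has one vertex of each degree along a chain, but because one orientation produces degrees $0,\dots,p-1$ rather than $1,\dots,p$, I start from a monochromatic set of size $p+1$ rather than $p$, so that an induced $Z_p$ is guaranteed in every case.
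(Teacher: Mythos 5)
Your proposal is correct and follows essentially the same route as the paper: the same choice $m=R(p+1)$, $q=2^{2m}=r(m)$, the same reduction via Lemma~\ref{lem:skew} to a skew matching or its complement, the same two-colouring of the free pairs $x_iy_j$ $(i<j)$ followed by one application of Ramsey's Theorem, and the same observation that the complement case yields $M^{bc}_{p+1}$ or a half graph whose degree sequence starts at $0$, forcing one to work with $p+1$ indices to extract an induced $Z_p$. The only cosmetic difference is that after deleting the isolated vertices in that last configuration one removes one vertex from \emph{each} part, but this does not affect the argument.
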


\begin{proof}
Let $m=R(p+1)$ (where $R$ is the symmetric Ramsey number) and $q=2^{2m}$. According to the proof of Lemma~\ref{lem:skew}, $G$ contains a skew matching of size $m$
or its complement. If $G$ contains a skew matching $M$, we color each pair $x_iy_i$, $x_jy_j$ ($i<j$) of edges of $M$ in two colors as follows: 
\begin{itemize}
\item color 1 if $x_i$ is not adjacent to $y_j$,
\item color 2 if $x_i$ is adjacent to $y_j$.
\end{itemize} 
By Ramsey's Theorem, $M$ contains a monochromatic set $M'$ of  edges of size $p+1$. If the color of each pair of edges in $M'$ is
\begin{itemize}
\item[1,] then $M'$ is an induced matching of size $p+1$, 
\item[2,] then the vertices of $M'$ induce a $Z_{p+1}$.  
\end{itemize} 
Analogously, in the case when $G$ contains the complement of a skew matching, we find either an induced $M^{bc}_{p+1}$ 
or an induced $Z_p$ (observe that the bipartite complement of $Z_{p+1}$ contains an induced $Z_p$).
\end{proof}

\begin{lemma}\label{lem:nd1}
For any positive integer $p$, there exists a positive integer $Q=Q(p)$ such that every graph $G$ of neighbourhood diversity $Q$  
contains one of the following nine graphs as an induced subgraph: $M_p$, $M^{bc}_{p}$, $Z_p$, $\overline{M}_p$, $\overline{M}^{bc}_p$, $\overline{Z}_p$,
$M^*_p$, $\overline{M}^{*}_p$, $Z^*_p$.  
\end{lemma}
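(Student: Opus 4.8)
The plan is to reduce the general-graph statement of Lemma~\ref{lem:nd1} to the bipartite statement of Lemma~\ref{lem:nd}, in the same spirit that Lemma~\ref{lem:m} reduces to Lemma~\ref{lem:matching}. The key observation is that a graph $G$ of very large neighbourhood diversity $Q$ contains a large set $D$ of pairwise non-similar vertices, and every two non-similar vertices are distinguished by some third vertex. First I would extract from $D$ a large ``transversal'' structure using Ramsey's Theorem on the distinguishing relation, so that the local picture becomes homogeneous enough to invoke Lemma~\ref{lem:nd}. Since the nine target graphs come in three families --- the bipartite ones ($M_p, M^{bc}_p, Z_p$), their complements ($\overline{M}_p, \overline{M}^{bc}_p, \overline{Z}_p$), and the split ones ($M^*_p, \overline{M}^{*}_p, Z^*_p$) --- the proof should split into cases according to which kind of homogeneous substructure we land in.

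\medskip
Concretely, here is the step sequence I would carry out. Take $Q$ large, chosen so that after each Ramsey reduction the surviving set still has size at least $q(p)$ from Lemma~\ref{lem:nd}. Start with a set $D$ of pairwise non-similar vertices of size $Q$. I would first apply Theorem~\ref{thm:Ramsey-graph-n} (ordinary Ramsey) to $D$ itself to find a large subset $D'$ that is either a clique or an independent set inside $D$. By symmetry (complementation swaps the three families, as the bars indicate), assume $D'$ is an independent set; the vertices of $D'$ are still pairwise non-similar, so each pair is distinguished by some vertex of $G$. Now I would look at the bipartite interaction between $D'$ and a set of distinguishing vertices: build an auxiliary bipartite graph and argue that it, too, has large neighbourhood diversity, because the vertices of $D'$ remain pairwise distinguishable by vertices drawn from outside. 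Passing to this large-diversity bipartite graph, Lemma~\ref{lem:nd} yields an induced $M_p$, $Z_p$, or $M^{bc}_p$ among the edges between the two sides.

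\medskip
The final step is to see what these bipartite subgraphs become inside $G$. If the distinguishing vertices form an independent set and $D'$ is independent, the induced bipartite subgraph is genuinely induced in $G$, giving one of $M_p, Z_p, M^{bc}_p$ directly. If instead one of the two sides carries a clique (which is exactly what a further application of Theorem~\ref{thm:Ramsey-graph-n} on the distinguishing vertices forces, when they are not independent), then creating that clique in one color class turns $M_p, Z_p, M^{bc}_p$ into the split graphs $M^*_p, Z^*_p$, and analogously $\overline{M}^*_p$ --- precisely the third family. Thus the three Ramsey-controlled configurations (both sides independent, one side a clique, the complementary pictures) map onto the three families of nine target graphs.

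\medskip
The hard part, I expect, will be the bookkeeping in the middle step: guaranteeing that the auxiliary bipartite graph between $D'$ and its distinguishing vertices actually has \emph{large neighbourhood diversity}, rather than merely many vertices. Non-similarity of vertices in $D'$ is a property in $G$, and one must ensure it survives restriction to the chosen distinguishing set --- that is, that enough distinguishers remain pairwise ``separating'' so that Lemma~\ref{lem:nd}'s hypothesis is met. Managing this while keeping the Ramsey parameters large enough (so $Q(p)$ can be defined explicitly as an iterated composition of $R(\cdot)$ and $q(\cdot)$) is the crux; once the bipartite diversity is secured, the case analysis via complementation and clique-creation is routine.
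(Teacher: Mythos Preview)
Your proposal is correct and matches the paper's proof essentially step for step: apply Ramsey to a transversal of the similarity classes to get a homogeneous set $A$, form the bipartite graph $G'=G[A,\,V(G)\setminus A]$, invoke Lemma~\ref{lem:nd} on $G'$ (with parameter $R(p)$ rather than $p$), then apply Ramsey once more to the other side before the nine-way case split. The step you flag as the ``hard part'' is in fact immediate --- once $A$ is homogeneous, two vertices of $A$ cannot be distinguished from within $A$, so their distinguisher must lie in $V(G)\setminus A$, and hence the vertices of $A$ automatically have pairwise distinct neighbourhoods in $G'$; the only real bookkeeping is pushing the parameter through the second Ramsey reduction, exactly as you anticipate.
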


\begin{proof}
Let $Q=R(q)$, where $q=2^{2m}$ and $m=R(R(p)+1)$ ($R$ is the symmetric Ramsey number).
We choose one vertex from each similarity class of $G$ and find in the chosen set a subset $A$ of vertices that form an independent set or a clique of size $q=2^{2m}$.
Let us call the vertices of $A$ white. We denote the remaining vertices of $G$ by $B$ and call them black.
Let $G'=G[A,B]$. By the choice of $A$, all vertices of this set
have pairwise different neighbourhoods in $G'$. Therefore, according to the proof of Lemma~\ref{lem:nd},
$G'$ contains a subgraph $G''$ inducing either $M_n$, or $M^{bc}_n$  or $Z_n$ with $n=R(p)$.
Among the $n$ black vertices of $G''$, we can find 
a subset $B'$ of vertices that form either a clique or an independent set of size $p$ in the graph $G$. 
Then $B'$ together with a subset of $A$ of size $p$ induce in $G$ one of the nine graphs listed in the statement of theorem.
\end{proof}

Since the nine graphs of Lemma~\ref{lem:nd1} are universal for their respective classes, we make the following conclusion.

\begin{theorem}\label{thm:diversity}
There exist exactly nine minimal hereditary classes of graphs of unbounded neighbourhood diversity: 
$\cal M$, ${\cal M}^{bc}$, $\cal Z$, $\overline{\cal M}$, $\overline{\cal M}^{bc}$, $\overline{\cal Z}$, ${\cal M}^*$, $\overline{\cal M}^*$, ${\cal Z}^*$. 
\end{theorem}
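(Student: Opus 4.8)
The plan is to establish Theorem~\ref{thm:diversity} in two directions, mirroring the structure of the previous Ramsey-type characterizations in this section. First I would verify that each of the nine listed classes genuinely has unbounded neighbourhood diversity. For the bipartite-type classes $\cal M$, ${\cal M}^{bc}$, and $\cal Z$, the universal graphs $M_n$, $M^{bc}_n$, $Z_n$ each have $2n$ vertices lying in $2n$ distinct similarity classes (in an induced matching no two vertices share neighbourhoods, in $Z_n$ the degree sequence forces all neighbourhoods to be distinct, and similarly for the bipartite complement), so $nd$ grows linearly in $n$. Since complementation preserves the similarity relation, the same holds for $\overline{\cal M}$, $\overline{\cal M}^{bc}$, $\overline{\cal Z}$; and since the split graphs $M^*_n$, $\overline{M}^*_n$, $Z^*_n$ are obtained by creating a clique in one color class — an operation that only refines similarity classes — their neighbourhood diversity is also unbounded. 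This direction is routine.

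For the converse, I would show that any hereditary class $X$ of unbounded neighbourhood diversity contains one of the nine classes in its entirety. The key engine is Lemma~\ref{lem:nd1}: if $X$ contains graphs of arbitrarily large $nd$, then for every $p$ it contains a graph $G$ with $nd(G)\ge Q(p)$, and hence $G$ contains at least one of the nine listed universal graphs $M_p, M^{bc}_p, Z_p, \overline{M}_p, \overline{M}^{bc}_p, \overline{Z}_p, M^*_p, \overline{M}^*_p, Z^*_p$ as an induced subgraph. By the pigeonhole principle, one fixed class among the nine supplies universal graphs of arbitrarily large size $p$; since each universal graph $H_p$ contains all $p$-vertex members of its class as induced subgraphs and $X$ is hereditary, $X$ must contain that entire class. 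This establishes that the nine classes are the only candidates for minimal classes of unbounded $nd$.

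Finally I would argue minimality: none of the nine classes properly contains another among the list, so each is genuinely minimal. Here I would observe that the nine classes are pairwise incomparable — for instance $\cal M$ consists of bipartite graphs of bounded degree with no edges between the pairs, whereas ${\cal M}^{bc}$, $\cal Z$, and the split/complement variants each fail one of these structural features (degree, bipartiteness, or presence of a large clique/independent set). A clean way to organize this is to note that $\cal M$, ${\cal M}^{bc}$, $\cal Z$ are mutually excluded by the argument already implicit in Lemma~\ref{lem:nd} (each avoids the other two for large $p$), that complementation separates the three unbarred classes from their three barred counterparts, and that the three split classes contain arbitrarily large cliques while the six bipartite ones do not.

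I expect the main obstacle to be the minimality verification rather than the existence direction, since the heavy Ramsey machinery is already packaged in Lemmas~\ref{lem:skew}--\ref{lem:nd1}. The delicate point is confirming that the nine classes are truly pairwise incomparable as hereditary classes, so that each qualifies as \emph{minimal}; in particular one must check the split graphs $M^*_p$, $\overline{M}^*_p$, $Z^*_p$ do not accidentally contain, or get contained in, the bipartite variants for small $p$, and that $\overline{\cal Z}$ is not swallowed by $\cal Z$ despite the bipartite-complement relationship exploited inside Lemma~\ref{lem:nd}. This is essentially a finite case analysis on the structure of the universal graphs, and while not deep, it is where the proof needs care.
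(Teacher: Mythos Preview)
Your proposal is correct and follows exactly the approach the paper takes: the paper derives Theorem~\ref{thm:diversity} directly from Lemma~\ref{lem:nd1} together with the $n$-universality of the nine graphs, which is precisely your pigeonhole-plus-universality argument. The paper is in fact even terser than you are --- it states the theorem as an immediate consequence and does not spell out the pairwise incomparability of the nine classes --- so your explicit attention to minimality is additional care rather than a departure.
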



\subsection{VC-dimension}


A set system $(X,S)$ consists of a set $X$ and a family $S$ of subsets of $X$. 
A subset $A\subseteq X$ is {\it shattered} if for every subset $B\subseteq A$
there is a set $C\in S$ such that $B=A\cap C$. The VC-dimension of $(X,S)$
is the cardinality of a largest shattered subset of $X$.

The VC-dimension of a graph $G=(V,E)$ was defined in \cite{VC} as the VC-dimension of 
the set system $(V,S)$, where $S$ the family of closed neighbourhoods of vertices of $G$,
i.e. $S=\{N[v]\ :\ v\in V(G)\}$. Let us denote the VC-dimension of $G$ by $vc[G]$.

In this section, we characterize VC-dimension 
by means of three minimal hereditary classes where this parameter is unbounded.
To this end, we first redefine it in terms of open neighbourhoods as follows.
Let $vc(G)$ be the size of a largest set $A$ of vertices of $G$ such that for any 
subset $B\subseteq A$ there is a vertex $v$ {\it outside of} $A$ with $B=A\cap N(v)$. 
In other words, $vc(G)$ is the size of a largest subset of vertices shattered by {\it open} 
neighbourhoods of vertices of $G$.

We start by showing that the two definitions are equivalent in the sense that they both are either bounded or unbounded in a hereditary class.
To prove this, we introduce the following terminology. 
Let $A$ be a set of vertices which is shattered by a collection of neighbourhoods (open or closed). 
For a subset $B\subseteq A$ we will denote by $v(B)$ the vertex whose neighbourhood (open or closed) intersect $A$ at $B$.
We will say that $B$ is {\it closed} if $v(B)$ belongs to $B$, and {\it open} otherwise. 

\begin{lemma}
$vc(G)\le vc[G]\le vc(G)(vc(G)+1)+1$
\end{lemma}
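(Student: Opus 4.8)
The plan is to prove the two inequalities separately, the left one being essentially immediate and the right one reducing to a counting argument over the subsets of a set shattered by closed neighbourhoods. Throughout I write $k=vc(G)$ and take a set $A$ of size $d=vc[G]$ that is shattered by closed neighbourhoods, so that every $B\subseteq A$ equals $A\cap N[v]$ for some vertex $v$.

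For the lower bound $vc(G)\le vc[G]$, I would start instead from a set $A$ witnessing $vc(G)$, so that every $B\subseteq A$ is realized as $A\cap N(v)$ for some $v\notin A$. Since $v\notin A$ forces $A\cap N[v]=A\cap N(v)=B$, the very same vertices witness that $A$ is shattered by \emph{closed} neighbourhoods, and therefore $vc[G]\ge|A|=vc(G)$. This is where the ``outside'' restriction in the definition of $vc(G)$ does the work: it guarantees that open and closed neighbourhoods leave the same trace on $A$.

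For the upper bound I would argue by contradiction on the size $d$. Assume $d\ge k+1$ (otherwise the bound is trivial), fix an arbitrary $(k+1)$-subset $A'\subseteq A$ and a target trace $C\subseteq A'$, and consider the $2^{d-k-1}$ subsets $B\subseteq A$ with $B\cap A'=C$. Each such $B$ has a realizer $v$ with $A\cap N[v]=B$, and then automatically $A'\cap N[v]=A'\cap B=C$. The crucial observation is that if even one of these realizers lies \emph{outside} $A'$, then, being outside $A'$, it realizes $C$ by its open neighbourhood, i.e.\ $A'\cap N(v)=C$, so $C$ is produced ``from outside'' in the sense required by $vc(G)$.

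The key step, and the main obstacle, is to bound the number of ``bad'' subsets $B$, meaning those \emph{all} of whose realizers lie in $A'$. I would show that any such $B$ must contain a realizer $v\in A'$, whence $B=A\cap N[v]=S_v$ with $v\in B\cap A'=C$, and that the assignment of a bad $B$ to this $v$ is injective, since the closed-neighbourhood trace $S_v$ on $A$ is determined by $v$. Hence there are at most $|A'|=k+1$ bad subsets. Consequently, as soon as $2^{d-k-1}>k+1$, every $C\subseteq A'$ admits a realizer outside $A'$, so $A'$ is shattered from outside and $vc(G)\ge k+1$, contradicting $vc(G)=k$. Therefore $2^{d-k-1}\le k+1$, which yields $d\le k+1+\log_2(k+1)\le k(k+1)+1$, completing the proof. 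The only delicate points are the bookkeeping separating the cases $v\in A'$ and $v\notin A'$ (the sole place where open and closed neighbourhoods diverge) and the injectivity argument bounding the bad traces; everything else is elementary counting, and in particular no Sauer--Shelah-type machinery is needed. I would note in passing that this line of reasoning in fact delivers the sharper bound $vc[G]\le vc(G)+1+\log_2(vc(G)+1)$, which immediately implies the stated quadratic estimate.
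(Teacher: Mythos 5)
Your proof is correct, and it takes a genuinely different route from the paper's. The paper works directly with the large set $A$ shattered by closed neighbourhoods, classifies subsets of $A$ as \emph{closed} or \emph{open} according to whether the chosen realizer $v(B)$ lies in $B$, first normalizes away closed singletons by deleting one vertex, and then iteratively contracts each closed subset $B$ of size at least two to its realizer; the bounds $|B\setminus\{v(B)\}|\le vc(G)$ and (number of contractions) $\le vc(G)$ combine to give the quadratic estimate. You instead fix a $(k{+}1)$-subset $A'$ of $A$ and, for each target trace $C\subseteq A'$, count the $2^{d-k-1}$ extensions $B\subseteq A$ with $B\cap A'=C$: a realizer of $B$ trapped in $A'$ must lie in $C$ (since $v\in A'\subseteq A$ and $v\in N[v]$ force $v\in B\cap A'$) and determines $B$ uniquely, so at most $k+1$ extensions are ``bad,'' and once $2^{d-k-1}>k+1$ every $C$ is realized by an open neighbourhood from outside $A'$, contradicting $vc(G)=k$. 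Both arguments hinge on the same elementary observation --- a realizer inside the shattered set must belong to the trace it realizes --- but your pigeonhole version is shorter, avoids the paper's somewhat delicate deletion bookkeeping, and yields the strictly sharper bound $vc[G]\le vc(G)+1+\log_2(vc(G)+1)$, which implies the stated quadratic one (as $\log_2(k+1)\le k^2$ for all $k\ge 0$). Your treatment of the first inequality matches the paper's (which simply calls it obvious): since the witnesses for $vc(G)$ lie outside the shattered set, their open and closed neighbourhoods leave the same trace on it.
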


\begin{proof}
The first inequality is obvious. To prove the second one, let $A$ be a subset of $V(G)$
of size $vc[G]$ which is shattered by a collection of closed neighbourhoods. 
If $A$ has no closed subsets, then $vc[G]=vc(G)$. Otherwise, let $B$ be a closed subset of $A$. 

Assume first that $|B|=1$. Then $B=\{v(B)\}$ and $v(B)$ is isolated in $G[A]$, i.e. it has no neighbours 
in $A$. Let $C$ be the set of all such vertices, i.e. the set of vertices each of which is a closed subset of $A$.
By removing from $A$ any vertex $x\in C$ we obtain a new set $A$ and may assume that it has no closed subsets of size $1$.
Indeed, for any vertex $y\in C$ different from $x$, there must exist a vertex $y'\not\in A$ such that $N(y')\cap A=\{x,y\}$ (since $A$ is shattered).
After the removal of $x$ from $A$, we have $N(y')\cap A=\{y\}$ and hence $\{y\}$ is not a closed subset anymore. 
This discussion allows us to assume in what follows that $A$ has no closed subsets of size $1$, in which case we only need to 
show that $vc[G]\le vc(G)(vc(G)+1)$.

Assume now that $B$ is a closed subset of $A$ of size at least 2. Suppose that $B-v(B)$ contains a closed subset $C$,
i.e. $v(C)\in C$. Observe that $v(C)$ is adjacent to $v(B)$, as every vertex of $B-v(B)$ is adjacent to $v(B)$.
But then $N[v(C)]\cap A$ contains $v(B)$ contradicting the fact that $N[v(C)]\cap A=C$. This contradiction shows that every 
subset of $B-v(B)$ is open, i.e.  $|B-v(B)|\le vc(G)$.

The above observation allows us to apply the following procedure: as long as $A$ contains a closed subset $B$ with at least two vertices, 
delete from $A$ all vertices of $B$ except for $v(B)$. Denote the resulting set by $A^*$. Assume the procedure was applied $p$ times and let $B_1,\ldots,B_p$ be the closed subsets
it was applied to. It is not difficult to see that the set $\{v(B_1),\ldots,v(B_p)\}$ has no closed subsets and hence its size cannot be large than 
$vc(G)$, i.e. $p\le vc(G)$. Combining, we conclude:
$$
vc[G]=|A|\le |A^*|+\sum\limits_{i=1}^{p} |B_i-v(B_i)|\le vc(G)+p\cdot vc(G)\le vc(G)(vc(G)+1).
$$
\end{proof}

\medskip
This lemma allows us to assume that if $A$ is shattered, then there is a set $C$ {\it disjoint} from $A$ such that 
for any subset $B\subseteq A$ there is a vertex $v\in C$ with $B=A\cap N(v)$, in which case we will say that $A$ is shattered by $C$, or $C$ shatters $A$. 

\medskip
Let $W_n=(A,B,E)$ be the bipartite graph with $|A|=n$ and $|B|=2^n$ such that all vertices of $B$ have pairwise different neighbourhoods in $A$.
Also, let $D_n$ be the split graph obtained from $W_n$ by creating a clique in $A$. 

\begin{lemma}\label{lem:universal}
The graph $W_n$ is an $n$-universal bipartite graph, i.e. it contains every bipartite graph with $n$ vertices as an induced subgraph.
\end{lemma}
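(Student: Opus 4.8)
The plan is to exhibit, for an arbitrary bipartite graph $H$ on $n$ vertices, an explicit induced embedding into $W_n$. First I would fix a bipartition $H=(X,Y,F)$ (every bipartite graph admits one) with $|X|=a$ and $|Y|=b$, so that $a+b=n$. The idea is to send $X$ into the part $A$ of $W_n$ and $Y$ into the part $B$, exploiting the fact that the $2^n$ vertices of $B$ realize every possible subset of $A$ as a neighbourhood.

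Concretely, I would choose an arbitrary subset $A'\subseteq A$ with $|A'|=a$ together with an arbitrary bijection $X\to A'$. For each $y\in Y$, its neighbourhood $N_H(y)\cap X$ corresponds under this bijection to a subset $T_y\subseteq A'$. I then want to assign to each $y$ a distinct vertex $b_y\in B$ whose neighbourhood in $A$ meets $A'$ exactly in $T_y$; its adjacencies to $A\setminus A'$ are irrelevant, since the induced subgraph is taken only on $A'\cup\{b_y:y\in Y\}$. With such an assignment the induced subgraph of $W_n$ on these $n$ vertices is isomorphic to $H$: both parts map to independent sets (as $W_n$ is bipartite), and a vertex $x$ is adjacent to $y$ in $H$ if and only if the image of $x$ lies in $T_y$, which happens if and only if the image of $x$ is adjacent to $b_y$ in $W_n$.

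The step requiring care is the existence of \emph{distinct} vertices $b_y$, because several vertices of $Y$ may share the same neighbourhood in $X$. Here I would use a counting argument: for a fixed $T\subseteq A'$, the vertices of $B$ whose neighbourhood meets $A'$ in exactly $T$ are precisely those realizing one of the $2^{n-a}$ subsets of $A$ that restrict to $T$ on $A'$, so there are exactly $2^{n-a}$ of them. Since the number of $y\in Y$ sharing any given pattern is at most $|Y|=b=n-a$, and $2^{n-a}\ge n-a$, a greedy assignment that picks distinct vertices pattern by pattern succeeds. This is the only genuine obstacle; once distinctness is secured, the verification that the embedding is induced is immediate from the definitions above.
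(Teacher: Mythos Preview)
Your argument is correct. The overall shape matches the paper's: map the $X$-side of $H$ into $A$ and then realize each vertex of $Y$ by a vertex of $B$ with the prescribed neighbourhood on the image of $X$. The one place where you and the paper diverge is in how you secure \emph{distinct} images for vertices of $Y$ that happen to have the same neighbourhood in $X$.

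You handle this by a counting argument: for each pattern $T\subseteq A'$ there are $2^{\,n-a}$ vertices of $B$ realizing it, and since at most $|Y|=n-a\le 2^{\,n-a}$ vertices of $Y$ can share a pattern, a greedy choice works. The paper instead sidesteps the issue by first enlarging the $X$-side: to each $y\in Y$ it adds a private new vertex in the $A$-part adjacent only to $y$, which forces all $Y$-neighbourhoods to become pairwise distinct. After this padding, the $A$-side has size at most $n_1+n_2=n$, so the enlarged graph embeds into $W_n$ by choosing, for each $y$, the \emph{unique} vertex of $B$ with that (now distinct) neighbourhood; the original $H$ then sits inside as an induced subgraph. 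Your route is more explicit and self-contained; the paper's padding trick is slicker in that it removes the need for any inequality between $2^{\,n-a}$ and $n-a$.
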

\begin{proof}
Let $G$ be a bipartite graph with $n$ vertices and with parts $A$ and $B$ of size $n_1$ and $n_2$, respectively. By adding at most $n_2$ vertices to $A$, 
we can guarantee that all vertices of $B$ have pairwise different neighbourhoods in $A$. 
Clearly, $W_n$ contains the extended graph and hence it also contains $G$ as an induced subgraph.    
\end{proof}

\begin{corollary}
Every co-bipartite graph with at most $n$ vertices is contained in $\overline{W}_n$ and every split graph with at most $n$ vertices is contained 
in both $D_n$ and in $\overline{D}_n$.
\end{corollary}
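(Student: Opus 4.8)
The plan is to reduce all three assertions to Lemma~\ref{lem:universal} (the $n$-universality of $W_n$), combined with two elementary facts: that the induced-subgraph relation commutes with complementation, i.e. $H$ is an induced subgraph of $G$ if and only if $\overline{H}$ is an induced subgraph of $\overline{G}$, and that the classes of co-bipartite and of split graphs are each closed under complementation. For the co-bipartite statement this is immediate: if $H$ is co-bipartite on at most $n$ vertices, then $\overline{H}$ is bipartite on at most $n$ vertices, so by Lemma~\ref{lem:universal} the graph $\overline{H}$ is an induced subgraph of $W_n$, and complementing shows that $H$ is an induced subgraph of $\overline{W}_n$.

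The substantive case is the embedding of a split graph $H$ into $D_n$. I would fix a partition $V(H)=K\cup I$ with $K$ a clique and $I$ an independent set, and delete the edges inside $K$ to obtain a bipartite graph $H'$ with parts $K$ and $I$; note $H'$ has at most $n$ vertices. I would then apply Lemma~\ref{lem:universal} to embed $H'$ as an induced subgraph of $W_n$, arranging that the part $K$ is mapped into the side $A$ of $W_n$; this is legitimate because the proof of Lemma~\ref{lem:universal} only enlarges the color class destined for $A$, so we are free to designate $K$ as that class. Since $D_n$ is obtained from $W_n$ by completing $A$ into a clique, in $D_n$ the image of $K$ becomes a clique, the image of $I$ remains independent (no edges are added within $B$), and the edges between the two images are exactly the $K$--$I$ edges of $H$. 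Hence the vertex set carrying the image of $V(H)$ induces precisely $H$ in $D_n$.

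Finally, the inclusion of split graphs in $\overline{D}_n$ follows formally: split graphs are self-complementary, so if $H$ is split then so is $\overline{H}$, whence $\overline{H}$ is an induced subgraph of $D_n$ by the previous paragraph, and complementing once more places $H$ inside $\overline{D}_n$. The only point requiring care---and the sole possible obstacle---is the control over which color class of $H'$ is routed to the $A$-side of $W_n$ in Lemma~\ref{lem:universal}, since the whole argument hinges on the clique $K$ landing in the part that $D_n$ completes into a clique; one must also check that completing $A$ introduces no unwanted $K$--$I$ edges and leaves $I$ independent, which is immediate because only edges inside $A$ are added.
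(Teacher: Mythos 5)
Your proof is correct and follows exactly the argument the paper intends: the corollary is stated without proof as an immediate consequence of Lemma~\ref{lem:universal}, obtained by complementing (for $\overline{W}_n$ and $\overline{D}_n$) and by routing the clique side of a split graph into the part $A$ that $D_n$ completes into a clique. Your care about which color class lands in $A$ is exactly the right point to check, and it does go through since the proof of Lemma~\ref{lem:universal} only enlarges the class destined for $A$.
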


\begin{lemma}\label{lem:reverse}
If a set $A$ shatters a set $B$ with $|B|=2^n$, then $B$ shatters a subset $A^*$ of $A$ with $|A^*|=n$.
\end{lemma}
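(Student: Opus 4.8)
The plan is to exploit the extra structure that the hypothesis $|B|=2^n$ supplies: there are exactly as many elements of $B$ as there are binary strings of length $n$, so I can index $B$ by the cube $\{0,1\}^n$ and then manufacture the required set $A^*$ by realizing the $n$ coordinate functions as neighbourhood patterns on $B$.

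First I would unwind the definitions. By the remark following the preceding lemma I may assume $A$ and $B$ are disjoint, so that ``$A$ shatters $B$'' means precisely that for every subset $B'\subseteq B$ there is a vertex $a\in A$ with $N(a)\cap B=B'$. Fix a bijection $b\mapsto v_b$ between $B$ and $\{0,1\}^n$, which exists since $|B|=2^n$. For each coordinate $k\in\{1,\dots,n\}$ set $B_k=\{\,b\in B:(v_b)_k=1\,\}$. Because $A$ shatters $B$, for every $k$ there is a vertex $a_k\in A$ with $N(a_k)\cap B=B_k$.

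Next I would put $A^*=\{a_1,\dots,a_n\}$ and check that $|A^*|=n$. The subsets $B_1,\dots,B_n$ are pairwise distinct (for $k\neq k'$ the indicator vector $e_k$ lies in $B_k$ but not in $B_{k'}$), so the vertices $a_1,\dots,a_n$ realizing them are pairwise distinct, giving $|A^*|=n$; moreover $A^*\subseteq A$ is disjoint from $B$, as required by the shattering convention. The final step is to verify that $B$ shatters $A^*$. Given any $A'\subseteq A^*$, write $A'=\{a_k:k\in S\}$ for some $S\subseteq\{1,\dots,n\}$, and let $b\in B$ be the unique element with $v_b=\mathbf{1}_S$, the indicator vector of $S$. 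Then for each $k$ we have $a_k\in N(b)$ iff $b\in B_k$ iff $(v_b)_k=1$ iff $k\in S$, whence $N(b)\cap A^*=A'$. Thus every subset of $A^*$ is cut out by some vertex of $B$, i.e. $B$ shatters $A^*$.

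There is no genuine computational obstacle here; the whole argument rests on the single idea of identifying $B$ with $\{0,1\}^n$ and using the coordinate projections $v\mapsto v_k$ as the patterns to be realized through the shattering of $B$ by $A$. The only points that need a line of care are the disjointness convention, which is supplied by the preceding remark, and the distinctness of the $a_k$, which is exactly what forces $|A^*|=n$ rather than something smaller.
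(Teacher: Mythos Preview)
Your argument is correct and is essentially the same as the paper's: both identify $B$ with $\{0,1\}^n$ and take $A^*$ to consist of vertices of $A$ realizing the $n$ coordinate half-cubes, then use the indicator vector of a given subset of $A^*$ as the shattering element of $B$. Your write-up is in fact slightly more careful than the paper's, since you explicitly verify that the $a_k$ are pairwise distinct (hence $|A^*|=n$) and flag the disjointness convention, points the paper leaves implicit.
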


\begin{proof}
Without loss of generality we assume that $B$ is the set of all binary sequences of length $n$. 
Then every vertex $a\in A$ defines a Boolean function of $n$ variables (the neighbourhood of $a$ consists of the binary sequences, where the function takes value 1). 
For each $i=1,\ldots,n$, let us denote by $a_i$ the Boolean function such that $a_i(x_1,\ldots,x_n)=1$ if and only if $x_i=1$. 
Let $A'$ be an arbitrary subset of $A^*=\{a_1,\ldots,a_n\}$ and $\alpha=(\alpha_1,\ldots,\alpha_n)$ its characteristic vector, 
i.e. $\alpha_i=1$ if and only if $a_i\in A'$. Clearly, $\alpha\in B$ and $N(\alpha)\cap A^*=A'$.
Therefore, $B$ shatters $A^*$. 
\end{proof}

\begin{lemma}\label{lem:k(n)}
For every $n$, there exists a $k=k(n)$ such that every graph $G$ with $vc(G)=k$ contains one of 
$W_n,\overline{W}_n, D_n, \overline{D}_n$ as an induced subgraph. 
\end{lemma}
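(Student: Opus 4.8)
The plan is to begin from a shattered set and then homogenize both it and its shattering set, using Theorem~\ref{thm:Ramsey-graph} to produce cliques or independent sets and Lemma~\ref{lem:reverse} to interchange the roles of the two sides whenever homogenization would otherwise destroy the shattering structure. Set $k=k(n)=R(2^{R(2^n)})$. Since $vc(G)=k$, there is a set $A$ with $|A|=k$ shattered by open neighbourhoods, and by the reduction recorded just before Lemma~\ref{lem:universal} we may assume $A$ is shattered by a \emph{disjoint} set $C$, i.e. for every $S\subseteq A$ some $c\in C$ satisfies $N(c)\cap A=S$.

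First I would homogenize the ground set. Applying Theorem~\ref{thm:Ramsey-graph} to $G[A]$ and keeping a suitable subset, I obtain $A_1\subseteq A$ that is a clique or an independent set with $|A_1|=2^m$, where $m=R(2^n)$; this is possible because $k=R(2^{2^m})\ge R(2^m)$. Restriction to $A_1$ preserves full shattering: for each $S\subseteq A_1$ the vertex of $C$ realizing $S$ on $A$ still realizes $S$ on $A_1$, so $A_1$ is shattered by a disjoint set $C_1\subseteq C$. Thus $C_1$ shatters $A_1$ with $|A_1|=2^m$.

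The delicate point, and the main obstacle, is that one cannot simply apply Ramsey to $C_1$: an arbitrary homogeneous subset of $C_1$ need not shatter any large subset of $A_1$, so homogenizing the shattering side loses shattering power. The remedy is to swap the two roles with Lemma~\ref{lem:reverse}. Since $C_1$ shatters $A_1$ and $|A_1|=2^m$, that lemma produces $C^*\subseteq C_1$ with $|C^*|=m$ which is shattered by $A_1$. Now the $C$-side is the ground set, so I may homogenize it freely: Theorem~\ref{thm:Ramsey-graph} applied to $G[C^*]$ yields a clique or independent set $C_2\subseteq C^*$ with $|C_2|=2^n$ (here the choice $m=R(2^n)$ is exactly what is needed), and by restriction $A_1$ still shatters $C_2$. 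Reversing once more via Lemma~\ref{lem:reverse}, now with $C_2$ in the role of the set of size $2^n$, gives a subset $A_2\subseteq A_1$ with $|A_2|=n$ that is shattered by $C_2$.

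Finally I would read off the induced subgraph on $A_2\cup C_2$. The two sets are disjoint (they lie in the disjoint sets $A$ and $C$), each is homogeneous, $A_2$ inheriting its type from $A_1$, and $C_2$ shatters $A_2$; since $|C_2|=2^n=2^{|A_2|}$, the $2^n$ vertices of $C_2$ realize the $2^n$ subsets of $A_2$ in bijection, so they have pairwise distinct neighbourhoods covering every subset of $A_2$. Hence $G[A_2\cup C_2]$ is precisely $W_n$, $\overline{W}_n$, $D_n$ or $\overline{D}_n$, with $A_2$ as the size-$n$ side and $C_2$ as the size-$2^n$ side, according to the four clique/independent combinations of $A_2$ and $C_2$ (Lemma~\ref{lem:universal} and the corollary following it confirm that these are the correct models). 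Tracing the sizes shows $k=R(2^{R(2^n)})$ suffices, so $k(n)$ exists.
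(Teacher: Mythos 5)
Your proof is correct and follows essentially the same strategy as the paper's: alternate Ramsey homogenization with Lemma~\ref{lem:reverse} to swap which side is shattered; the paper stops after one reversal (using the large homogeneous set $A'$ itself as the $2^n$-side, with constant $R(2^{R(n)})$), while you reverse a second time to land exactly on $W_n,\overline{W}_n,D_n,\overline{D}_n$ at the cost of the slightly larger constant $R(2^{R(2^n)})$. One notational slip: with $m=R(2^n)$ your $k$ equals $R(2^m)$, not $R(2^{2^m})$, but the inequality you actually need holds.
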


\begin{proof}
Define $k=R(2^{R(n)})$, where $R$ is the symmetric Ramsey number. Since $vc(G)=k$, there are two subsets $A$ and $B$ of $V(G)$ such that $|A|=k$ 
and $B$ shatters $A$. By definition of $k$, $A$ must have a subset $A'$ of size $2^{R(n)}$ which is a clique or an independent set.
Clearly, $B$ shatters $A'$ and hence, by Lemma~\ref{lem:reverse}, $A'$ shatters a subset $B'$ of $B$ of size $R(n)$. 
Then $B'$ must have a subset $B''$ of size $n$ which is either a clique or an independent set. Now $G[A'\cup B'']$ 
is either bipartite or co-bipartite or split graph, $|B''|=n$ and $A'$ shatters $B''$. Therefore, $G[A'\cup B'']$
contains one of $W_n,\overline{W}_n, D_n, \overline{D}_n$ as an induced subgraph.   
\end{proof}

\begin{theorem}\label{thm:VC}
The classes of bipartite, co-bipartite and split graphs are the only three minimal hereditary classes of graphs 
of unbounded VC-dimension. 
\end{theorem}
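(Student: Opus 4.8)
The plan is to establish three facts: that each of the three named classes has unbounded VC-dimension, that every hereditary class of unbounded VC-dimension contains at least one of them, and that none of the three can be replaced by a proper hereditary subclass without making the VC-dimension bounded. The first fact is immediate: the set $A$ in $W_n$ is shattered by $B$, so $vc(W_n)\ge n$, and by the inequality $vc(G)\le vc[G]$ this forces $vc[W_n]\ge n$ as well. Since $W_n$ is bipartite (Lemma~\ref{lem:universal}), $\overline{W}_n$ is co-bipartite, and $D_n$ is split, each of the three classes already contains graphs of arbitrarily large VC-dimension.

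For the second fact I would argue as follows. Suppose $X$ is hereditary with unbounded VC-dimension; by the equivalence $vc[G]\le vc(G)(vc(G)+1)+1$ the open-neighbourhood parameter $vc$ is then also unbounded on $X$, so for every $n$ there is a graph $G\in X$ with $vc(G)\ge k(n)$. By Lemma~\ref{lem:k(n)} and heredity, $X$ contains one of $W_n,\overline{W}_n,D_n,\overline{D}_n$. Grouping $D_n$ and $\overline{D}_n$ together (both are $n$-universal for split graphs by the corollary following Lemma~\ref{lem:universal}), each $n$ falls into one of three cases, so by the pigeonhole principle one case occurs for arbitrarily large $n$. If $W_n\in X$ for arbitrarily large $n$, then since $W_n$ contains every $n$-vertex bipartite graph, $X$ contains all bipartite graphs; the co-bipartite and split cases are identical using $\overline{W}_n$ and $D_n/\overline{D}_n$. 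Hence $X$ contains one of the three classes.

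For minimality, consider a proper hereditary subclass $Y$ of, say, the bipartite graphs. Then $Y$ omits some bipartite graph $H$, and since $W_N$ with $N=|V(H)|$ is $N$-universal, $Y$ omits $W_N$. Now any $G\in Y$ is triangle-free, so it contains none of $\overline{W}_n,D_n,\overline{D}_n$, each of which has a triangle for $n\ge 3$; thus if $vc(G)\ge k(N)$ then Lemma~\ref{lem:k(n)} would force $W_N\subseteq G$, a contradiction, and $Y$ has bounded VC-dimension. The co-bipartite case follows by the same reasoning on independence number (since $\alpha\le 2$ rules out $W_n,D_n,\overline{D}_n$), and the split case uses that split graphs are $(2K_2,C_4)$-free, ruling out $W_n$ and $\overline{W}_n$, so a split graph of large VC-dimension contains $D_n$ or $\overline{D}_n$; as a proper hereditary subclass omits some split $H$ and hence both $D_{|V(H)|}$ and $\overline{D}_{|V(H)|}$ by the corollary, its VC-dimension is again bounded.

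The universality and pigeonhole bookkeeping are routine; the step demanding the most care is the minimality argument, where I must verify that inside each of the three classes a graph of large VC-dimension is forced to contain precisely the universal graph of that class and none of the other three candidates produced by Lemma~\ref{lem:k(n)}. This is exactly what makes the three classes pairwise incomparable and genuinely minimal, rather than merely a sufficient list, and it is where the triangle-freeness, the bound $\alpha\le 2$, and the $(2K_2,C_4)$-freeness of the respective classes do the essential work.
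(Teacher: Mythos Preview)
Your proof is correct, but it is organised differently from the paper's and does more work than necessary. The paper argues by contrapositive: it assumes $X$ contains none of the three classes, picks one forbidden bipartite graph $G_1$, one forbidden co-bipartite graph $G_2$, and one forbidden split graph $G_3$, sets $n$ to the maximum of their orders, and then observes that any $G\in X$ with $vc(G)\ge k(n)$ would contain one of $W_n,\overline{W}_n,D_n,\overline{D}_n$, each of which is $n$-universal for the corresponding class and hence contains the appropriate $G_i$ --- a contradiction. This single step replaces both your pigeonhole argument and your explicit minimality verification. In particular, your third part is redundant: once you know that every hereditary class of unbounded VC-dimension contains one of the three classes, minimality of each follows immediately from the (trivial) pairwise incomparability of bipartite, co-bipartite and split graphs, without any appeal to triangle-freeness, $\alpha\le 2$, or $(2K_2,C_4)$-freeness. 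What your route buys is a concrete structural explanation of why each class is minimal on its own terms; what the paper's route buys is brevity and the avoidance of the infinite pigeonhole step. A minor remark: split graphs are $(2K_2,C_4,C_5)$-free, not just $(2K_2,C_4)$-free, but since you only use the forward implication this does not affect your argument.
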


\begin{proof}
Clearly these three classes have unbounded VC-dimension, since they contain $W_n,\overline{W}_n, D_n, \overline{D}_n$
with arbitrarily large values of $n$.

Now let $X$ be a hereditary class containing none of these three classes. Therefore, there is a bipartite graph $G_1$,
a co-bipartite graph $G_2$ and a split graph $G_3$ which are forbidden for $X$. Denote by $n$ the maximum number of 
vertices in these graphs. 

Assume that VC-dimension is not bounded for graphs in $X$ and let $G\in X$ be a graph with $vc(G)=k$, where $k=k(n)$ is from Lemma~\ref{lem:k(n)}.
Then $G$ contains one of $W_n,\overline{W}_n, D_n, \overline{D}_n$, say $W_n$. Since $W_n$ is $n$-universal (Lemma~\ref{lem:universal}),
it contains $G_1$ as an induced subgraph, which is impossible because $G_1$ is forbidden for graphs in $X$. This contradiction
shows that VC-dimension is bounded in the class $X$. 
\end{proof}


\section{The speed of hereditary properties}
\label{sec:speed}

Theorem~\ref{thm:VC} together with Alekseev-Bollob\'as-Thomason Theorem imply the following conclusion.

\begin{theorem}
The entropy of a hereditary property $\cal P$ equals zero if and only if the VC-dimension of graphs in $\cal P$ is bounded by a constant.
\end{theorem}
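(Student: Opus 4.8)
The plan is to combine Theorem~\ref{thm:VC} with the Alekseev-Bollob\'as-Thomason Theorem to show the two properties are equivalent. The key observation is that both statements — zero entropy and bounded VC-dimension — are characterized by the \emph{exclusion} of certain ``large'' subclasses, so I would prove the biconditional by matching up these obstructions. Recall that the Alekseev-Bollob\'as-Thomason Theorem states that the entropy of $\cal P$ equals $1-1/k(\cal P)$, where $k(\cal P)$ is the largest $k$ for which $\cal P$ contains some ${\cal E}_{i,j}$ with $i+j=k$. Hence the entropy is zero precisely when $k(\cal P)\le 1$, i.e. when $\cal P$ contains none of the classes ${\cal E}_{i,j}$ with $i+j=2$. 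These three classes are exactly the bipartite graphs (${\cal E}_{2,0}$), the co-bipartite graphs (${\cal E}_{0,2}$), and the split graphs (${\cal E}_{1,1}$).

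First I would establish the forward direction: if the entropy of $\cal P$ is zero, then $\cal P$ contains none of the three classes of bipartite, co-bipartite, and split graphs. This is immediate from the remark above, since containing any of these would force $k(\cal P)\ge 2$ and hence a strictly positive entropy. By Theorem~\ref{thm:VC}, these three classes are precisely the minimal hereditary classes of unbounded VC-dimension; since $\cal P$ is hereditary and contains none of them, no minimal obstruction to bounded VC-dimension lies inside $\cal P$, so the VC-dimension of graphs in $\cal P$ must be bounded by a constant.

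Conversely, I would show that if the VC-dimension of graphs in $\cal P$ is bounded, then the entropy of $\cal P$ is zero. If the entropy were nonzero, then by the Alekseev-Bollob\'as-Thomason Theorem $k(\cal P)\ge 2$, so $\cal P$ would contain one of the three classes ${\cal E}_{2,0}$, ${\cal E}_{0,2}$, ${\cal E}_{1,1}$ in its entirety. But each of these classes has unbounded VC-dimension by Theorem~\ref{thm:VC}, contradicting the assumption that VC-dimension is bounded in $\cal P$. Hence the entropy is zero.

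The only real subtlety — the step I would treat most carefully — is the precise translation between the entropy threshold and the containment of the three minimal classes. One must verify that zero entropy corresponds exactly to $k(\cal P)\le 1$ and that $k(\cal P)\le 1$ is equivalent to $\cal P$ avoiding all three of ${\cal E}_{2,0}$, ${\cal E}_{0,2}$, ${\cal E}_{1,1}$; this uses the monotonicity of the index together with the fact that the smallest classes ${\cal E}_{1,0}$ and ${\cal E}_{0,1}$ (edgeless and complete graphs) are already forced by any infinite hereditary class. Once this bookkeeping is in place, the equivalence follows cleanly, since Theorem~\ref{thm:VC} identifies these same three classes as exactly the minimal classes of unbounded VC-dimension, making the two obstruction sets coincide.
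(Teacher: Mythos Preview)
Your proposal is correct and follows exactly the route the paper intends: the paper's own ``proof'' is simply the one-line remark that the statement follows from Theorem~\ref{thm:VC} together with the Alekseev--Bollob\'as--Thomason Theorem, and you have accurately unpacked that combination by identifying the three classes ${\cal E}_{2,0}$, ${\cal E}_{0,2}$, ${\cal E}_{1,1}$ as the common obstruction set for both nonzero entropy and unbounded VC-dimension. Your handling of the bookkeeping (entropy zero $\Leftrightarrow$ $k({\cal P})\le 1$, and $k({\cal P})\ge 1$ for any infinite hereditary class) is also correct.
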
  

In the rest of the section we characterize three other jumps of the speed by means of graph parameters.
In the proofs we use the results and notation of Sections~\ref{sec:parameters}.
 
\subsection{Hereditary classes of constant speed}

The family of hereditary classes of constant speed constitute the lowest
layer of the lattice of hereditary properties. To characterize classes in this layer,
we introduce one more parameter as follows. Let $S(G)$ and $s(G)$ denote be the size of a largest and a smallest similarity class in $G$, respectively.
Then the {\it similarity difference} of $G$ is $S(G)-s(G)$.

\begin{theorem}\label{thm:constant}
The speed of a hereditary property $\cal P$ is constant if and only if the similarity difference of graphs in $\cal P$ is bounded by a constant.
\end{theorem}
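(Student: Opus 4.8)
The plan is to prove the two directions of the equivalence separately, relating the similarity difference to the structure already developed for neighbourhood diversity. First I would establish the easy direction: if the speed $P_n$ is constant, then the similarity difference is bounded. I would argue by contradiction. Suppose the similarity difference is unbounded, so for every $t$ there is a graph $G \in {\cal P}$ with a similarity class of size $S(G)$ and one of size $s(G)$ with $S(G) - s(G) \ge t$. The key observation is that within a single similarity class, all vertices are pairwise similar, meaning the class induces either a clique or an independent set (since two similar vertices $x,y$ are not distinguished by each other, $x$ and $y$ are either both adjacent or both non-adjacent, and this propagates). A large similarity class that is, say, an independent set of size $S(G)$ allows one to produce many distinct labelled graphs on a fixed vertex count by choosing how many of the class members to include, which should force $P_n$ to grow. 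The main work here is to convert ``large similarity difference'' into ``many non-isomorphic $n$-vertex induced subgraphs'' for a fixed $n$, thereby contradicting $P_n = \Theta(1)$.

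For the converse direction, I would show that if the similarity difference is bounded by a constant $c$, then the speed is constant. The natural strategy is to first bound the neighbourhood diversity. If $nd(G)$ were unbounded on ${\cal P}$, then by Theorem~\ref{thm:diversity} the class ${\cal P}$ would contain one of the nine minimal classes $\cal M$, ${\cal M}^{bc}$, $\cal Z$, $\overline{\cal M}$, $\overline{\cal M}^{bc}$, $\overline{\cal Z}$, ${\cal M}^*$, $\overline{\cal M}^*$, ${\cal Z}^*$. I would then check that each of these nine classes contains graphs of \emph{unbounded} similarity difference: in an induced matching $M_n$ every vertex lies in its own similarity class of size $1$ while a large independent/clique part can be arranged to have a large similarity class, and similarly for the chain graphs $Z_n$ and the split variants. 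This gives a contradiction, so bounded similarity difference forces bounded neighbourhood diversity, say $nd(G) \le d$ for all $G \in {\cal P}$.

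Next, with $nd(G) \le d$ bounded, each graph in ${\cal P}$ is obtained by taking a fixed ``quotient'' graph on at most $d$ similarity classes (of which there are only finitely many, a number depending on $d$) and then blowing up each class into a clique or independent set of some size. The total vertex count of such a graph is the sum of the class sizes. The bounded similarity difference condition $S(G) - s(G) \le c$ is what I would use to control these sizes: it forces all $d$ similarity classes of a given graph to have sizes within $c$ of each other. Hence for a fixed number $n$ of vertices, the sizes of the classes are essentially determined up to a bounded additive slack, so there are only boundedly many labelled $n$-vertex graphs arising this way, giving $P_n = O(1)$.

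The hard part will be the last counting step: making precise that bounded neighbourhood diversity \emph{together with} bounded similarity difference yields only $O(1)$ many labelled $n$-vertex graphs. Bounded $nd$ alone does not suffice (a single similarity class of size $n$ already gives polynomial speed through the choice of which labels populate it), so the similarity-difference bound must be used to pin down the partition of $n$ into at most $d$ nearly-equal parts. I expect the cleanest route is to observe that once both $d$ and $c$ are fixed, the multiset of class sizes of an $n$-vertex graph is constrained to a set of size independent of $n$, and then to show the labelled count is controlled by a function of $d$ and $c$ only. Verifying that this genuinely bounds the \emph{labelled} (not merely unlabelled) count, and reconciling it with the known fact that constant speed corresponds to the very bottom layer of Scheinerman--Zito, is where I would spend the most care.
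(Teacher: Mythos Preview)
Your overall architecture is different from the paper's, and the second direction has a genuine gap.

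The paper's proof introduces four small classes ${\cal R}$ (stars and edgeless graphs), ${\cal E}^1$ (graphs with at most one edge), and their complements, and proves a single claim: if ${\cal P}$ excludes all four, then ${\cal P}$ contains only finitely many graphs that are neither complete nor edgeless. This is done by a short direct Ramsey argument: pick a vertex $x$ with a neighbour $y$ and a non-neighbour $z$, split the remaining vertices into the eight cells $V_U$ for $U\subseteq\{x,y,z\}$, and bound each cell by $R(p)$. Both directions of the theorem then follow immediately from this claim, since all four classes have non-constant speed and unbounded similarity difference.

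Your route via Theorem~\ref{thm:diversity} can be made to work for step~1 (bounded similarity difference $\Rightarrow$ bounded $nd$), though your parenthetical justification is inaccurate: in $M_n$ the similarity classes are the edges, each of size~$2$, so the difference is~$0$; you must instead use other members of the hereditary closure of each of the nine classes (e.g.\ $K_2$ plus $n$ isolated vertices in ${\cal M}$, or $K_{1,n}$ in ${\cal Z}$) to witness unbounded similarity difference.

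The real problem is your step~2. Knowing $nd(G)\le d$ and $S(G)-s(G)\le c$ does \emph{not} bound the number of labelled $n$-vertex graphs by the counting you sketch: once $k\ge 2$ similarity classes of sizes $n_1,\ldots,n_k$ are fixed, the number of labelled realisations is governed by the multinomial $\binom{n}{n_1,\ldots,n_k}$, which grows without bound in $n$. (Your aside that ``a single similarity class of size $n$ already gives polynomial speed'' is also off: $nd=1$ means complete or edgeless, which is constant speed.) What is actually needed, and what you have not argued, is that in a \emph{hereditary} class with bounded similarity difference every sufficiently large graph must have $nd(G)=1$; equivalently, the graphs with $nd\ge 2$ have bounded order. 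This requires exploiting heredity, e.g.\ by passing to an induced subgraph on one large similarity class together with a single vertex of another class to force a large similarity difference. Once that is in hand the counting is trivial, but the multinomial-style argument you propose does not supply it.
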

\begin{proof}
To prove the theorem we introduce the following classes of graphs:
\begin{itemize}
\item[$\cal R$] the class of graphs each of which is either an edgeless graph or a star,
\item[${\cal E}^1$] the class of graphs with at most one edge.
\end{itemize}
The proof of the theorem is based on the following claim.
\begin{itemize}
\item[(*)] {\it If none of $\cal R$, ${\cal E}^1$, $\cal \overline{R}$, ${\cal \overline{E}}^1$ is a subclass of $\cal P$, then  
$\cal P$ contains finitely many graphs different from complete and empty graphs}.

\medskip
Indeed, if none of the four classes is a subclass of $\cal P$, then there is a number $p$ such that none of the following graphs belongs to $\cal P$:
$K_{1,p}$, $\overline{K}_{1,p}$, $H^1_p$, $\overline{H}^1_p$, where $H^1_p$ is a graph from ${\cal E}^1$ containing one edge and $p$ isolated vertices. 

Let $G$ be a graph in $\cal P$ which is neither complete nor edgeless. Then $G$ contains a vertex $x$ which has both a neighbour 
$y$ and a non-neighbour $z$. The remaining vertices of $G$ can be partitioned (with respect to $x,y,z$) into at most eight subsets. 
For $U\subseteq \{x,y,z\}$, we denote $V_U=\{v\not \in \{x,y,z\}\ : \ N(v)\cap \{x,y,z\}=U\}$. To prove the claim, let us show that 
each of $V_U$ contains at most $R(p)$ vertices, where $R(p)$ is the symmetric Ramsey number. 

If $U=\emptyset$, then $|V_U|<R(p)$  because a clique of size $p$ in $V_U$ together with $x$ create an induced $\overline{K}_{1,p}$,
while an independent set of size $p$ in $V_U$ together with $x$ and $y$ create an induced $H^1_p$, which is impossible because both graphs are forbidden in $X$.

If $|U|=1$, say $U=\{x\}$, then $|V_U|<R(p)$  because a clique of size $p$ in $V_U$ together with $y$ or $z$ create an induced $\overline{K}_{1,p}$,
while an independent set of size $p$ in $V_U$ together with $x$ create an induced $K_{1,p}$, which is impossible because both graphs are forbidden in $X$.

For $|U|>1$, the result follows by complementary arguments.
\end{itemize}

\noindent
Now we turn to the proof of the theorem and assume first that $\cal P$ has a constant speed, then none of $\cal R$, ${\cal E}^1$, $\cal \overline{R}$, ${\cal \overline{E}}^1$ is a subclass of $\cal P$,
since ${\cal R}_n={\cal \overline{R}}_n=n+1$ and ${\cal E}^1_n={\cal \overline{E}}^1_n={n\choose 2}+1$. Therefore, by Claim (*), 
$\cal P$ contains finitely many graphs different from complete and empty graphs, and hence the similarity difference of graphs in $\cal P$ is bounded by a constant.

Conversely, assume that the similarity difference of graphs in $\cal P$ is bounded by a constant. Then none of $\cal R$, ${\cal E}^1$, $\cal \overline{R}$, ${\cal \overline{E}}^1$ is a subclass of $\cal P$,
because this parameter is unbounded in each of the four listed classes. Therefore, by Claim (*), 
$\cal P$ contains finitely many graphs different from complete and empty graphs, implying that $\cal P$ has a constant speed.
\end{proof}

\subsection{Hereditary classes with a polynomial speed of growth}

According to the proof of Theorem~\ref{thm:constant}, the four minimal classes above the constant layer are 
$\cal R$, ${\cal E}^1$, $\cal \overline{R}$ and ${\cal \overline{E}}^1$. Each of them contains 
polynomially many $n$-vertex labelled graphs and hence the layer following the constant one is polynomial.

\begin{theorem}\label{thm:polynomial}
Let $\cal P$ be a hereditary class above the constant layer, then the speed of $\cal P$ is polynomial if and only if  the complex degree and $c$-matching number are bounded for graphs in $\cal P$ by a constant.
\end{theorem}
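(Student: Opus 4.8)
The plan is to prove both implications, using the minimal-class characterisations of Sections~\ref{sec:degree} and~\ref{sec:matching} for the easy direction and a direct structural counting argument for the hard one. For the forward implication I would argue by contraposition: suppose one of the two parameters is unbounded on $\cal P$. If the complex degree is unbounded then, since $\cal P$ is hereditary, Theorem~\ref{thm:complex-degree} forces $\cal P$ to contain one of $\cal S$, $\overline{\cal S}$, $\cal Q$, $\overline{\cal Q}$, $\cal B$, $\overline{\cal B}$; if the $c$-matching number is unbounded, Theorem~\ref{thm:c-matching} forces $\cal P$ to contain one of $\cal M$, $\cal B$, $\cal S$, $\overline{\cal M}$, $\overline{\cal B}$, $\overline{\cal S}$. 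In either case it suffices to observe that every one of these eight classes has super-polynomial speed: in $\cal B$, $\cal S$, $\cal Q$ and their complements an $n$-vertex member is essentially determined by a subset of its vertices, giving $2^{\Omega(n)}$ labelled graphs, while $\cal M$ and $\overline{\cal M}$ are already factorial (they count partial matchings). Hence $\cal P$ inherits a super-polynomial speed and cannot be polynomial.

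For the reverse implication, assume both parameters are bounded by a common constant $k$. By passing to complements I may treat each $G\in\cal P$ in one of two symmetric cases according to whether $\mu(G)\le k$ or $\overline{\mu}(G)\le k$ (one must hold, as $c\mu(G)\le k$); I spell out the case $\mu(G)\le k$. Then the endpoints of a maximal matching form a vertex cover $S$ with $|S|\le 2k$, so $I:=V(G)\setminus S$ is independent and every vertex of $I$ has its entire neighbourhood inside $S$. Grouping the vertices of $I$ by their neighbourhood in $S$ yields at most $2^{|S|}\le 2^{2k}$ classes, each a similarity class of false twins. The crucial step is to invoke the complex-degree bound: if two distinct classes $I_T$ and $I_{T'}$ both had more than $k$ vertices, then choosing $s\in T\triangle T'$ would produce a vertex adjacent to every vertex of one large class and non-adjacent to every vertex of the other, so both $d(s)>k$ and $\overline{d}(s)>k$, contradicting $cd(s)\le k$. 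Thus at most one neighbourhood class can exceed $k$ vertices.

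Consequently every such $G$ is obtained from a core of at most $c_0:=2k+(2^{2k}-1)k$ vertices, namely $S$ together with all the small classes, by appending a single independent set all of whose vertices attach identically inside the core. Counting labelled graphs of this shape, one selects the core ($O(n^{c_0})$ ways), fixes its internal adjacencies and the common attachment of the remaining vertices (a bounded number of choices in each case), which bounds the number of $n$-vertex graphs with $\mu(G)\le k$ and $c\Delta(G)\le k$ by $O(n^{c_0})$; the symmetric case $\overline{\mu}(G)\le k$ gives the same bound for the complementary shape. Hence $P_n=O(n^{c_0})$ is polynomial, and since $\cal P$ lies above the constant layer its speed is at least polynomial, so it is exactly polynomial.

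I expect the only delicate point to be the crucial step above: obtaining a genuinely polynomial rather than merely exponential bound requires the complex-degree hypothesis precisely in order to forbid two simultaneously large twin classes. Bounded $c$-matching number alone only yields bounded neighbourhood diversity, which is consistent with exponential speed, as witnessed by $\cal S$ (whose members have just two similarity classes yet whose speed is $2^{\Omega(n)}$). Some care is also needed to verify that the boundedly many small classes together with the cover $S$ have bounded total size, so that the core is genuinely of constant order and the final count is truly polynomial.
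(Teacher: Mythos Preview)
Your proof is correct and follows essentially the same strategy as the paper: the forward direction is identical (contrapositive via the minimal-class characterisations of Theorems~\ref{thm:complex-degree} and~\ref{thm:c-matching}), and the reverse direction likewise first extracts a large independent set from $\mu(G)\le k$ and then uses the complex-degree bound to show that all but boundedly many vertices form a single similarity class. The only difference is in how that last extraction is carried out: the paper observes that each of the $\le 2k$ vertices outside $I$ has at most $k$ neighbours or at most $k$ non-neighbours in $I$ and simply deletes those, yielding a core of size at most $2k(k+1)$, whereas you partition $I$ by neighbourhood in $S$ and argue that at most one block can be large, giving a core of size at most $2k+(2^{2k}-1)k$. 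Both arguments are valid; the paper's gives a better constant, but either suffices for a polynomial bound.
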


\begin{proof}
If the speed of a hereditary property $\cal P$ is polynomial, then it contains none of the classes ${\cal B,S,Q,M},\overline{\cal B},\overline{\cal S},\overline{\cal Q}, \overline{\cal M}$,
because ${\cal S}_n=2^n-n$, ${\cal B}_n=2^{n-1}$, ${\cal Q}_n=n2^{n-1}-n(n+1)/2+1$ and
${\cal M}_n$ is at least $\lfloor n/2\rfloor !$. This implies by Theorems~\ref{thm:complex-degree} and~\ref{thm:c-matching} that the complex degree and $c$-matching number are bounded for graphs in $\cal P$.

Conversely, assume that the complex degree and $c$-matching number are bounded for graphs in $\cal P$ by a constant $k$ and let 
$G$ be a graph in $\cal P$ with $n$ vertices. Without loss of generality, let $\mu(G)\le k$.
Therefore, $n-2k$ vertices form an independent set $I$ in $G$. Also, since the complex degree is bounded in $G$,  
every vertex $v$ outside of $I$ has at most $k$ neighbours or at most $k$ non-neighbours in $I$. By removing from $I$ either $k$ neighbours or $k$ non-neighbours of $v$,
for each $v\not \in I$, we transform $I$ into a similarity class of size $n-c$, where $c\le 2k(k+1)$. It is not difficult to see that the number of labelled graphs
on $n$ vertices with a similarity class of size $n-c$ is ${n\choose c}2^{{c+1\choose 2}+1}$ and hence the speed of $\cal P$ is polynomial.  
\end{proof}

\subsection{Hereditary classes with an exponential speed of growth}

The proof of theorem~\ref{thm:polynomial} tells us that there are eight minimal classes above the polynomial layer, 
namely ${\cal B,S,Q,M},\overline{\cal B},\overline{\cal S},\overline{\cal Q},$ and $\overline{\cal M}$. 
Each of these classes contains at least exponentially many $n$-vertex labelled graphs and hence the next layer is the exponential one.

\begin{theorem}\label{thm:exponential}
Let $\cal P$ be a hereditary class above the polynomial layer, then the speed of $\cal P$ is exponential if and only if the neighbourhood diversity is bounded for graphs in $\cal P$ by a constant.
\end{theorem}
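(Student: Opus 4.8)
The plan is to prove Theorem~\ref{thm:exponential} by exploiting the characterization of neighbourhood diversity from Theorem~\ref{thm:diversity}, exactly mirroring the two-directional structure used in Theorems~\ref{thm:constant} and~\ref{thm:polynomial}. For the forward direction, I would argue by contrapositive: suppose the neighbourhood diversity is \emph{unbounded} on $\cal P$. Then by Theorem~\ref{thm:diversity}, $\cal P$ must contain at least one of the nine minimal classes $\cal M$, ${\cal M}^{bc}$, $\cal Z$, $\overline{\cal M}$, $\overline{\cal M}^{bc}$, $\overline{\cal Z}$, ${\cal M}^*$, $\overline{\cal M}^*$, ${\cal Z}^*$. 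The key observation is that each of these nine classes contains at least factorially many $n$-vertex labelled graphs: indeed, the induced matchings $\cal M$ already contain $\lfloor n/2\rfloor!$ labelled graphs (as noted in the proof of Theorem~\ref{thm:polynomial}), and the same factorial lower bound holds for the chain graphs $\cal Z$ and the threshold-type class ${\cal Z}^*$, as well as for the complement and split-analogue constructions. Consequently the speed of $\cal P$ is at least factorial, so it cannot be exponential. Since $\cal P$ lies above the polynomial layer by hypothesis, this forces the speed to be precisely factorial, contradicting the assumption that it is exponential.

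For the converse, I would assume the neighbourhood diversity of graphs in $\cal P$ is bounded by a constant $k$ and count labelled graphs directly. The point is that if $G$ has $n$ vertices and at most $k$ similarity classes, then $G$ is determined, up to the labelling, by (i) a partition of $\{1,\ldots,n\}$ into at most $k$ blocks, (ii) the adjacency pattern \emph{within} each block (each block is either a clique or an independent set, since any two vertices in a similarity class are either both adjacent to or both non-adjacent to every outside vertex, and in particular to each other), and (iii) the adjacency pattern \emph{between} each pair of blocks (complete or empty bipartite, again by similarity). The data in (ii) and (iii) together amount to at most $2^k \cdot 2^{\binom{k}{2}}$ choices, a constant depending only on $k$. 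The number of ways to partition $n$ labelled vertices into at most $k$ blocks is at most $k^n$. Multiplying gives a bound of the form $c \cdot k^n = 2^{O(n)}$ on the number of $n$-vertex labelled graphs in $\cal P$, i.e. the speed is at most exponential. Combined with the hypothesis that $\cal P$ lies above the polynomial layer, this pins the speed down to exactly exponential.

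The main obstacle I anticipate is the forward direction, specifically verifying the factorial lower bound uniformly across all nine minimal classes rather than just for $\cal M$. For $\cal M$, $\overline{\cal M}$, ${\cal M}^{bc}$ and $\overline{\cal M}^{bc}$ the matching-like structure makes the $\lfloor n/2\rfloor!$ bound transparent. For $\cal Z$, $\overline{\cal Z}$ and ${\cal Z}^*$ one should check that chain graphs on $2n$ vertices with a prescribed degree sequence already encode a permutation-worth of labelled realizations; and the split analogues ${\cal M}^*$ and $\overline{\cal M}^*$ need the same care, since creating a clique in one colour class must not collapse the count. In practice it suffices to exhibit for each class an injection from permutations of an $\lfloor n/c\rfloor$-element set into distinct labelled $n$-vertex members, which is routine but must be stated so that the factorial growth is unambiguous. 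Once that lower bound is secured, both implications follow cleanly from the layer hypothesis and Theorem~\ref{thm:diversity}.

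\begin{proof}
Assume first that the speed of $\cal P$ is exponential. Since $\cal P$ lies above the polynomial layer, its speed is strictly below factorial. We claim this forces the neighbourhood diversity of graphs in $\cal P$ to be bounded. Indeed, suppose not; then by Theorem~\ref{thm:diversity}, $\cal P$ contains one of the nine classes $\cal M$, ${\cal M}^{bc}$, $\cal Z$, $\overline{\cal M}$, $\overline{\cal M}^{bc}$, $\overline{\cal Z}$, ${\cal M}^*$, $\overline{\cal M}^*$, ${\cal Z}^*$. Each of these classes has at least factorial speed: for $\cal M$ the number of $n$-vertex labelled induced matchings is at least $\lfloor n/2\rfloor!$, and for each of the remaining eight classes one obtains the same lower bound by an analogous construction (for the chain and threshold classes $\cal Z$, $\overline{\cal Z}$, ${\cal Z}^*$ by encoding a permutation in the degree sequence, and for the split and bipartite-complement classes by the corresponding complementation). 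Hence the speed of $\cal P$ would be at least factorial, contradicting that it is exponential.

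Conversely, assume the neighbourhood diversity of graphs in $\cal P$ is bounded by a constant $k$. Let $G$ be a graph in $\cal P$ on $n$ vertices. Its vertex set is partitioned into at most $k$ similarity classes; within each class all vertices have the same adjacency to the outside, so each class induces either a clique or an independent set, and between any two classes the bipartite graph is either complete or empty. Thus $G$ is determined by a partition of its $n$ labelled vertices into at most $k$ blocks, together with at most $2^k$ choices for the type (clique/independent) of the blocks and at most $2^{\binom{k}{2}}$ choices for the adjacency between pairs of blocks. The number of partitions of $n$ labelled vertices into at most $k$ blocks is at most $k^n$, so the number of $n$-vertex labelled graphs in $\cal P$ is at most $2^{k+\binom{k}{2}}\,k^n=2^{O(n)}$. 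Hence the speed of $\cal P$ is at most exponential; since $\cal P$ is above the polynomial layer, its speed is exactly exponential.
\end{proof}
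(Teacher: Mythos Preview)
Your proof is correct and follows essentially the same approach as the paper: the forward direction invokes Theorem~\ref{thm:diversity} together with the factorial lower bound $\lfloor n/2\rfloor!$ for each of the nine minimal classes, and the converse uses the identical counting argument yielding the bound $k^n 2^{\binom{k}{2}+k}$. You simply supply more detail than the paper does in justifying why similarity classes are cliques or independent sets and why between-class adjacency is complete or empty, but the structure and the key bounds are the same.
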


\begin{proof}
If the speed of $\cal P$ is exponential, the $\cal P$ contains none of the following classes as a subclass: 
$${\cal M}, {\cal M}^{bc}, {\cal Z}, \overline{\cal M}, \overline{\cal M}^{bc}, \overline{\cal Z}, {\cal M}^*, \overline{\cal M}^*, {\cal Z}^*.$$
Indeed, as stated previously, the number of $n$-vertex labelled graphs in ${\cal M}$ is at least $\lfloor n/2\rfloor !$.
It is not difficult to see that the same is true for each of the other listed classes. This implies, by Theorem~\ref{thm:diversity}, 
that the neighbourhood diversity is bounded for graphs in $\cal P$.

Conversely, assume the neighbourhood diversity is bounded for graphs in $\cal P$ by a constant $k$. It is not difficult to see that
the number of labeled graphs on $n$ vertices 
with at most $k$ similarity classes is at most $k^n2^{\binom{k}{2}+k}$. Therefore, the speed of $\cal P$ is exponential.
\end{proof}

\section{Conclusion and open problems}
\label{sec:con}

In this paper, we revealed several relationships between the speed of a hereditary property and
boundedness of some graph parameters.
In particular, we have shown that a hereditary class $X$ has at most exponential speed of growth if and only if the neighbourhood diversity of graphs in $X$ is bounded by a constant.
Since all the minimal classes above the exponential layer have a factorial speed, the next layer is the factorial one. 

The structure of classes at the bottom of the factorial layer,
with speeds below the Bell number, have been characterized in \cite{Bell}. This is precisely the family of classes where a certain graph parameter, called in \cite{uni} uniformicity, is bounded by a constant. 
The set of all minimal classes of unbounded uniformicity have been recently described in \cite{SIDMA}. However, globally, i.e. beyond graphs of bounded uniformicity,  
the structure of classes in the factorial layer remains an open question. 
Also, it is not clear what are the parameters responsible for a factorial speed of growth. According to the results of the present paper, 
any such parameter must be a restriction of VC-dimension. On the other hand, it should generalize clique-width, since all classes of bounded clique-width have at most factorial 
speed of growth \cite{Allen}.

We conclude the paper by observing that there is an intriguing relationship between the three unavoidable classes of bipartite graphs of unbounded neighbourhood diversity 
($\cal M$, ${\cal M}^{bc}$ and $\cal Z$, see Lemma~\ref{lem:nd}) and 
the three unavoidable structures in the Canonical Ramsey Theorem \cite{canonical} in case of coloring 2-subsets. In this case, the theorem can be stated as follows:
for every positive integer $\ell$, there exists a positive integer $n=n(\ell)$ such that if the 2-subsets of an $n$-set
are colored with arbitrarily many colors, then the set contains a subset of size $\ell$ which is either 
\begin{itemize}
\item monochromatic, i.e. all the 2-subsets have the same color, or
\item rainbow, i.e. the 2-subsets have pairwise different colors, or 
\item skew, i.e. the elements of the subset can be ordered $x_1,x_2,\ldots,x_{\ell}$
in such a way that the subsets $\{x_i,x_j\}$ $(i<j)$ and $\{x_p,x_t\}$ ($p<t$) have the same color if and only if $i=p$. 
\end{itemize}
In this statement, monochromatic 
and rainbow colorings are complement to each other, similarly to classes $\cal M$ and ${\cal M}^{bc}$, while a skew coloring is self-complementary, as in the case of chain graphs (the class $\cal Z$). 
We ask whether the two results (the existence of three unavoidable classes of bipartite graphs of unbounded neighbourhood diversity and  three unavoidable colorings in the Canonical Ramsey Theorem)
can be derived from each other.

\end{document}